\numberwithin{equation}{section}
\newtheorem{theorem}{Theorem}[section]
\newtheorem{proposition}{Proposition}[section]
\newtheorem{corollary}{Corollary}[section] 
\newtheorem{lemma}[proposition]{Lemma}
\newenvironment{proof}{{\noindent\it Proof:}\quad}{\hfill $\square$\par}  
\newtheorem{rem}{Remark}
\newcommand{\D}{\mathrm{D}}
\newcommand\nvec{\mathbf{n}}
\newcommand\xhat{\hat{\mathbf{x}}}
\newcommand\yhat{\hat{\mathbf{y}}}
\newcommand\zhat{\hat{\mathbf{z}}}
\newcommand\Qvec{\mathbf{Q}}
\newcommand\Pvec{\mathbf{P}}
\begin{document}
\title{Multistability for a Reduced Nematic Liquid Crystal Model in the Exterior of 2D Polygons}
\author{Yucen Han and Apala Majumdar}
%\date{}
\affiliation{Department of Mathematics and Statistics, University of Strathclyde, G1 1XQ, United Kindom.}

\begin{abstract}
We study nematic equilibria in an unbounded domain, with a two-dimensional regular polygonal hole with $K$ edges, in a reduced Landau--de Gennes framework. %at a fixed temperature. 
This complements our previous work on the ``interior problem" for nematic equilibria confined inside regular polygons (SIAM Journal on Applied Mathematics, 80(4):1678–1703, 2020). The two essential dimensionless model parameters are $\lambda$-the ratio of the edge length of polygon hole to the nematic correlation length, and an additional degree of freedom,  $\gamma^*$-the nematic director at infinity. In the $\lambda\to 0$ limit, the limiting profile has two interior point defects outside a generic polygon hole, except for a triangle and a square. For a square hole, the limiting profile has either no interior defects or two line defects depending on $\gamma^*$, and for a triangular hole, there is a unique interior point defect outside the hole. In the $\lambda\to\infty$ limit, there are at least $\binom{K}{2}$ stable states %differentiated by the location of two bend vertices  
and the multistability is enhanced by $\gamma^*$, compared to the interior problem. Our work offers new insights into how to tune the existence, location, and dimensionality of defects. 
\end{abstract}

\maketitle

\section{Introduction}
\label{sec:intro}
Nematic liquid crystals (NLCs) are classical examples of partially ordered materials or viscoelastic anisotropic materials, with long-range orientational ordering \cite{de1993physics}. The NLC molecules are typically asymmetric in shape e.g. rod-like, and they tend to align along locally preferred directions, referred to as \emph{nematic directors} in the literature \cite{de1993physics}. The optical, mechanical and rheological NLC responses are direction-dependent, with distinguished responses along the nematic directors. Indeed, the anisotropic NLC responses to external stimuli, interfaces and boundaries make them soft, self-organising and the cornerstone of several NLC applications in science and engineering \cite{phillips2011texture}. Nematics in confinement have attracted substantial scientific interest in the academic and industrial sectors \cite{lagerwallreview}. In fact, nematics inside a planar cell are the building block of the celebrated twisted nematic display, and contemporary work has focused on the tremendous potential of NLCs for multistable systems i.e. NLC systems with multiple observable states without applied fields, such that each observable state offers a distinct mode of functionality \cite{zbd}. Nematic defects play a key role in multistability, where a nematic defect is a point/line/surface where the nematic director cannot be uniquely defined \cite{de1993physics}. Nematic defects often organise the space of stable states in multistable systems in the sense that we can classify the stable states in terms of the nature, multiplicity and locations of defects. Additionally, defects have distinct optical signatures and can act as binding sites or ``special'' sites in materials design or applications. The delicate interplay between geometric frustration, boundary effects, material properties and defects in multistability leads to a suite of challenging mathematical questions at the interface of applied topology, nonlinear partial differential equations and scientific computation (to name a few). Equally, it gives new inroads into engineered soft materials, topological materials or meta-materials which could find new applications in photonics, robotics and artificial intelligence \cite{lagerwallcholestericshells}.

In this paper, we focus on the mathematical modelling and numerical computation of NLC equilibria outside regular two-dimensional (2D) polygons with homeotropic boundary conditions i.e. the nematic director is constrained to be normal to the polygon boundary. We work in a reduced 2D Landau--de Gennes framework, details of which are given in the next section, and this can capture the nematic directors and the nematic defects in a 2D setting, along with informative insights into how multistability can be tailored by the shape and size of the polygon. This toy mathematical problem models a single colloidal particle, in the shape of a 2D polygon, suspended in an extended NLC medium, which is of both experimental and theoretical interest \cite{Mu2008Self}, \cite{smalyukh1, smalyukh2}, \cite{gupta2005texture}, \cite{phillips2011texture}. In \cite{Mu2008Self, smalyukh1, smalyukh2}, the authors fabricate almost 2D platelets of different polygonal shapes suspended in a NLC medium. Using advanced microfabrication techniques and optical methods, they can track the director profiles and the associated defects. The authors observe multiple types of defects: dipoles, Saturn rings, and various linked or entangled defect loop lines intertwining arrays of embedded inclusions. In fact, in \cite{Mu2008Self}, the authors use optical tweezers to manipulate the defect lines, to link them or disentangle them, and in doing so, create various exotic knotted defect patterns. In all cases, the observed states and their defect patterns strongly depend on the geometry and orientation of colloidal particle(s) and their boundary effects, offering excellent examples of organic self-assembled structures in NLC media.

In \cite{han2020pol}, we study the ``interior'' problem of multistability for NLCs confined to a regular 2D polygon with tangent boundary conditions i.e. the nematic director is tangent to the polygon edges, in a reduced Landau--de Gennes (LdG) framework. We use conformal mapping techniques and methods from elliptic partial differential equations to study nematic equilibria in two distinguished limits, phrased in terms of a dimensionless variable $\lambda$. The variable, $\lambda$, is the ratio of two length scales - the ratio of the polygon edge length to the nematic correlation length, which is a material-dependent length scale related to typical defect sizes. In the $\lambda \to 0$ limit, the limiting profile is unique with a single isolated point defect at the centre of the polygon, coined as the \emph{Ring} solution. The only exceptions are the triangle and the square, which are dealt with separately. In fact, the limiting profile for a triangle has an isolated fractional point defect at the centre whereas the unique limiting profile for a square domain is labelled as the Well Order Reconstruction Solution (WORS), first reported in \cite{kraljmajumdar2014}. The WORS is a distinctive profile  with two orthogonal defect lines along the square diagonals (also see \cite{kraljmajumdar2014}). In the $\lambda \to \infty$ limit, we use combinatorial arguments to demonstrate multistability i.e. there are at least $\binom{K}{2}$ distinct NLC equilibria on a $K$-polygon with $K$ edges. In related papers, we comprehensively study NLC solution landscapes on 2D polygons with tangent boundary conditions \cite{hannonlinearity2020, hanproceedings2021}.

We study the complementary ``exterior'' problem in this paper: asymptotic and numerical studies of NLC equilibria outside a regular polygonal hole, immersed in $\mathbb{R}^2$. This problem, and related problems, have received some analytical interest although systematic studies are missing. For example, in \cite{phillips2011texture}, the authors study stable NLC profiles outside a square hole with homeotropic boundary conditions. They numerically observe string defects (line defects) pinned to square edges, defects at square vertices and interior point defects, depending on the temperature and square size. In \cite{wang2017topological,wang2018formation}, the authors numerically investigate the possible structures of NLCs with one, two and multiple spherical inclusions. In \cite{gupta2005texture}, the authors model NLCs with multiple spherical inclusions and numerically investigate how the defect set depends on the spatial organisation and properties of the spherical particles, along with those of the ambient NLC media. In \cite{Bronsard2016minimizers}, the authors rigorously analyse NLC equilibria outside a spherical particle with homeotropic boundary conditions, in a three-dimensional (3D) LdG framework. They obtain elegant limiting profiles in the small particle and large particle limit, and in fact, produce an analytic expression for the celebrated Saturn ring solution with a distinct defect loop around the spherical particle. 

In this paper, we focus on the effects of shape and size of the polygonal hole on the corresponding NLC equilibria, in a reduced LdG model. The methodology follows that of \cite{han2020pol}, the key difference being the extra degree of freedom rendered by the far-field boundary conditions, away from the polygon boundary. As with the interior problem, we compute limiting profiles for the stable NLC equilibria in the $\lambda \to 0$ and $\lambda \to \infty$ limits, where $\lambda$ has the same interpretation as in \cite{han2020pol}, accompanied by supplementary numerical results. The stable equilibria are modelled by local or global energy minimisers of the reduced LdG free energy, which in turn are solutions of the associated Euler-Lagrange equations that are a system of two coupled nonlinear partial differential equations.  In the $\lambda \to 0$ limit, there is a unique NLC equilibrium or equivalently, a unique minimiser of the reduced LdG free energy. However, the limiting profiles are more varied compared to the interior problem. For a square hole, we can observe either line defects or point defects at the square vertices, depending on the far-field condition, in the $\lambda \to 0$ limit. There is qualitative agreement with the numerical results in \cite{phillips2011texture}. In general, the locations of the defects for the unique limiting profile depend on the far-field condition. Using a result from Ginzburg--Landau theory in \cite{baumanowensphillips}, we show that there are exactly two interior point defects for a generic polygonal hole, $E_K$ with $K$ edges and $K>4$, and the location of these defects can be tuned with the far-field condition.
In the $\lambda \to \infty$ limit, we provide a simple estimate for the number of stable NLC equilibria using combinatorial arguments, and multistability is enhanced compared to the interior problem. This is because the exterior problem has lesser symmetry than the interior problem, due to the far-field conditions. For example, on the interior of a square domain, there are two rotationally equivalent diagonal solutions for which the NLC profile is approximately aligned along the square diagonal. We lose this equivalence for the exterior problem since the far-field condition breaks the symmetry. 

Applied mathematics focuses on the development of new mathematical methods, and equally elegant applications of known methods to new settings. Our work falls into the second category, where we largely build on previous work, and use techniques from complex analysis, Ginzburg--Landau theory for superconductivity, symmetry results and combinatorial arguments to analyse limiting profiles, complemented by numerical studies to support the theory. In doing so, we demonstrate how geometric frustration and nematic defects go hand in hand for tailored multistability, and this is a good forward step for rigorous mathematical studies of NLC solution landscapes in complex geometries with voids, mixed boundary conditions and in some cases, multiple order parameters \cite{hanpre2021}.

The paper is organised as follows. In Section~\ref{sec:theory}, we review the reduced 2D LdG framework for modelling NLCs in 2D scenarios. In Section~\ref{sec:lambda0}, we focus on the $\lambda \to 0$ limit of minimisers of the reduced Landau--de Gennes free energy. We use the Schwarz--Christoffel mapping to define an associated boundary-value problem on the unit disc, for each regular polygonal hole, and this boundary-value problem is explicitly solved. The defect set is tracked analytically, along with its dependence on $K$ and the far-field condition. In Section~\ref{sec:infinity}, we shift focus to the $\lambda \to \infty$ limit and the limiting problem is captured by the Laplace problem for an angle in the plane, with Dirichlet boundary conditions. This angle models the 2D nematic director. The Dirichlet boundary conditions for the angle are not uniquely defined, and this leads to multistability in this limit. We present illustrative numerical examples for a square and a hexagon, and conclude in Section~\ref{sec:conclusion} with a summary and some perspectives.
\section{Theoretical framework}
\label{sec:theory}
The Landau--de Gennes (LdG) theory is a celebrated continuum theory for nematic liquid crystals \cite{de1993physics}, and was indeed one of the reasons for awarding the Nobel Prize for physics to Pierre-Gilles de Gennes in 1991. The LdG theory is a phenomenological theory that assumes macroscopic quantities of interest vary slowly
on the molecular length scale, and is based on the crucial concept of a LdG order parameter, which is a macroscopic measure of the NLC anisotropy or orientational ordering. The LdG order parameter, known as the $\mathbf{Q}$-tensor order
parameter is a symmetric, traceless $3\times 3$ matrix with five degrees of freedom. The nematic director is often interpreted as the eigenvector of the LdG $\mathbf{Q}$-tensor with the largest positive eigenvalue \cite{han2020pol}. Consider a three-dimensional (3D) domain, with a 2D cross-section $\Omega$, and height, $h$. Then the 3D LdG energy is given by :
\begin{equation}
    \label{eq:3Denergy}
    I[\Qvec]: = \int_{\Omega \times [0,h]} \frac{L}{2}|\nabla \Qvec |^2 + F_b(\Qvec)~dV
\end{equation}
where $F_b(\Qvec): = \frac{A}{2}\textrm{tr}\Qvec^2 - \frac{B}{3} \textrm{tr}\Qvec^3 + \frac{C}{4} (\textrm{tr}\Qvec^2)^2$, $A$ is a re-scaled temperature, $B$ and $C$ are positive material-dependent constants \cite{mottram2014introduction}. The constant $L>0$ is a material-dependent elastic constant, with units of Newtons. Typical values of $L$ are $10^{-12}$ Newtons \cite{priestly2012introduction}. We have adopted the Dirichlet elastic energy density, $w(\nabla \Qvec) \propto |\nabla \Qvec|^2 = \sum_{i,j,k=1}^{3} Q_{ij,k}^2$, where $Q_{ij,k} = \frac{\partial Q_{ij}}{\partial x_k}$, based on the assumption that all elastic deformations e.g., splay, twist and bend deformations are energetically degenerate. The physically observable stable equilibria are modelled by minimisers of \eqref{eq:3Denergy} in an appropriately defined admissible space.

For thin 3D systems, for which $h$ is much smaller than the dimensions of $\Omega$, it suffices to work with the reduced Landau--de Gennes (rLdG) model, based on the assumption that the nematic director is in the cross-section plane, and structural details are invariant across the height of the system \cite{golovatyreduced}. 
The rLdG model has been successfully applied for capturing the qualitative properties of physically relevant solutions and for probing into defect cores \cite{brodin2010melting,gupta2005texture,Mu2008Self,Igor2006Two}. In the rLdG model, the nematic state in the 2D cross-section/2D domain is described by a reduced order parameter: a symmetric, traceless $2\times 2$ matrix, $\mathbf{P}$, as given below
\begin{equation} \label{eq:P}
    \mathbf{P} =
    \left(\begin{tabular}{cc}
    $P_{11}$&$P_{12}$\\
    $P_{12}$&$-P_{11}$\\
\end{tabular}\right).
\nonumber
\end{equation}
%As is standard in the calculus of variations, the physically observable configurations correspond to local or global minimizers of a suitably defined rLdG free energy. 
We work at a special temperature, $A = -B^2/3C$, where $B$ and $C$ are as before \cite{de1993physics}, and with the following rLdG free energy:
\begin{equation}
        F[\mathbf{P}]: = \int_{\Omega}\frac{L}{2}|\nabla \mathbf{P}|^2+f_b(\Pvec) \mathrm{d A}.
        \label{rLdG_energy}
    \end{equation}
where $\Omega$ is the 2D cross-section of our 3D domain. In the remainder of this manuscript, $\Omega$ is the complement of a regular 2D polygon with $K$ edges, $E^C_K$, in $\mathbb{R}^2$. Here, we have employed the Dirichlet elastic energy density as in \eqref{eq:3Denergy} and the bulk energy density as
$$
f_b(\Pvec) = -\frac{B^2}{4C}\textrm{tr}\Pvec^2 + \frac{C}{4}(\textrm{tr}\Pvec^2)^2.$$
We choose this temperature, partly for comparison with previous work in \cite{han2020pol} and \cite{fangmajumdarzhang2020}, and partly because for this special temperature,  the critical points of the rLdG model exist as critical points of the full 3D LdG free energy in 3D settings, for suitably defined boundary conditions. This is generally not true for arbitrary $A<0$; see \cite{canevariharrismajumdarwang2020} for more details. More precisely, at $A=-B^2/3C$, given a critical point $\Pvec_c$ of \eqref{rLdG_energy}, there exists a critical points $\Qvec_c$ of \eqref{eq:3Denergy} such that % to critical points of an appropriately defined 3D LdG energy:

$$ \Qvec_c = \Pvec_3 - \frac{B}{6C}\left(2\zhat\otimes\zhat - \xhat\otimes\xhat - \yhat\otimes\yhat \right), $$
where $\xhat, \yhat. \zhat$ are coordinate unit-vectors.
The matrix, $\Pvec_3$ is a $3\times 3$ symmetric traceless matrix, such that $ (\Pvec_3)_{ij} = (\Pvec_{c})_{ij}$ for $i,j=1,2$ and all remaining matrix entries are set to zero.

The energy, \eqref{rLdG_energy} is non-dimensionalised with $\overline{x} = x/\bar{\lambda}$, where $\bar{\lambda}$ is the edge-length of the polygonal hole. %In what follows, we drop the $bar$ from the rescaled variables for brevity.
\begin{equation}
        F[\mathbf{P}]: = \int_{E_K^C}\frac{1}{2}|\bar{\nabla} \mathbf{P}|^2+\frac{\bar{\lambda}^2}{L}f_b(\Pvec) \mathrm{d A}.
        \label{p_energy}
    \end{equation}
The working domain is now $E_K^C$, the complement of a 2D re-scaled polygon $E_K$ with $K$ edges of unit length, centered at the origin, with vertices
\begin{equation}
    w_k =  \left(\cos\left(2\pi \left(k-1\right)/K\right),\sin\left(2\pi \left(k-1\right)/K\right)\right),\ k = 1,...,K.\nonumber
\end{equation}
We label the edges counterclockwise as $C_1, ..., C_K$, starting from $\left(1,0\right)$. See figure \ref{fig:domain}. In the following, we drop the bar over $\nabla$ for brevity.

\begin{figure}
\centering
        \includegraphics[width=0.5\columnwidth]{./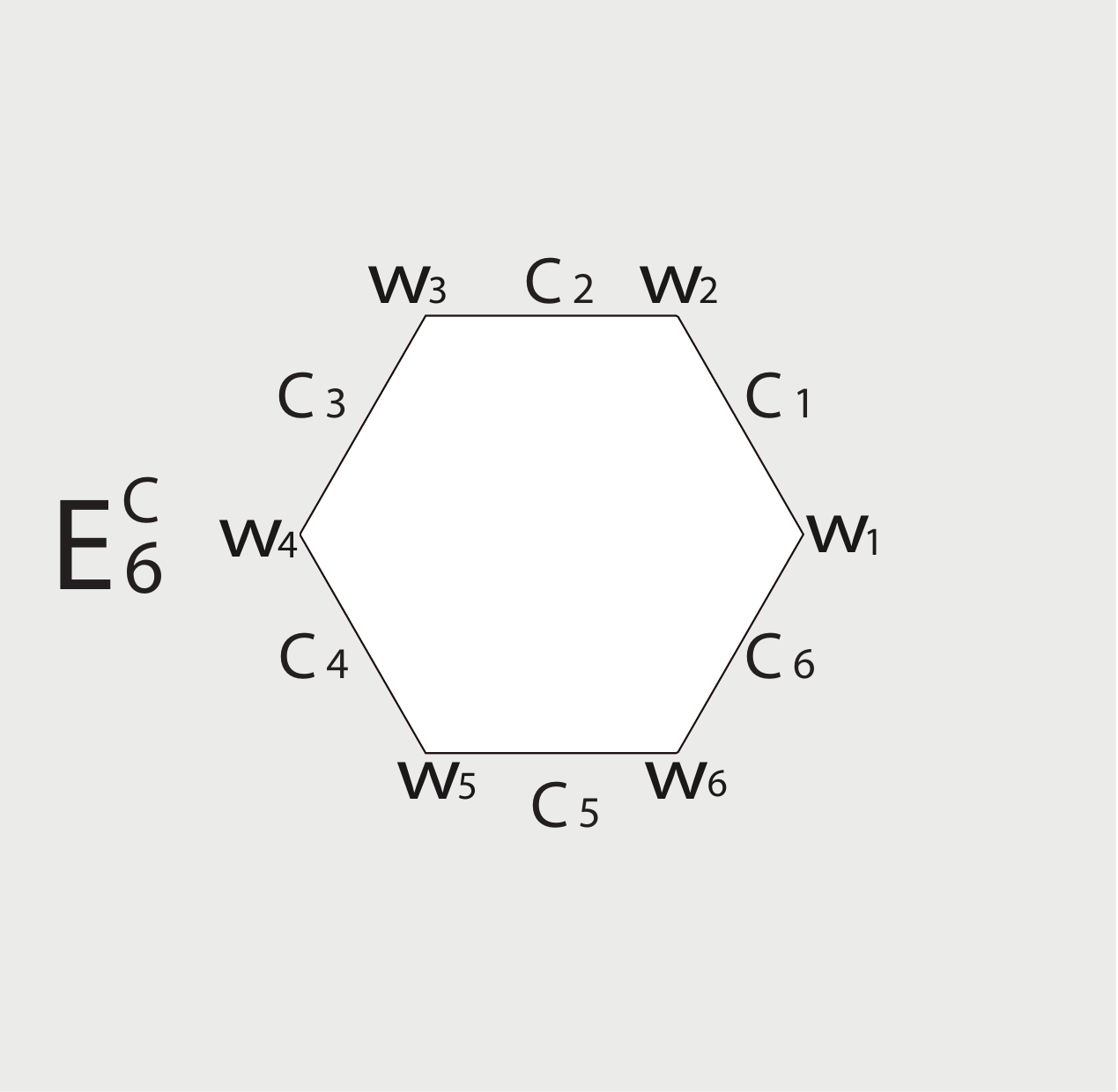}
        \caption{The normalized domain $E_K^C$, with $K = 6$, as an illustrative example.}
        \label{fig:domain}
\end{figure}

We can also write $\mathbf{P}$ in terms of an order parameter, $s$, and an angle $\gamma$ as shown below -
\begin{equation}
    \mathbf{P} = 2s\left(\mathbf{n}\otimes\mathbf{n}-\frac{1}{2}\mathbf{I}_2\right),
    \label{P}
\end{equation}
where $\mathbf{n} = \left(\cos\gamma,\sin\gamma\right)^T$ is the nematic director in the plane, and $\mathbf{I}_2$ is the $2\times 2$ identity matrix,
so that
\[P_{11} = s\cos\left(2\gamma\right),\ P_{12} = s\sin\left(2\gamma\right).\]
The defect set is
simply identified with the nodal set of $s$, also see \cite{han2020pol}, interpreted as the set of no planar order in $E_K^C$.

%We define the distance between a point on the boundary and the vertices as
%\begin{equation}
%    dist\left(w\right) = min\{||w-w_k||_2,k = 1,...,K\},\ w\ on\ \partial E_K.\nonumber
%\end{equation}
We impose homeotropic boundary conditions on $\partial E_K$, which requires $\mathbf{n}$ in \eqref{P} to be homeotropic/normal to the edges of $E_K$. However, there is a necessary mismatch at the corners/vertices. We impose a continuous Dirichlet boundary condition, $\mathbf{P} = \mathbf{P}_b$, on $\partial E_K$ as defined below:
\begin{equation}
    \begin{aligned}
    &P_{11b}\left(w\right) = \alpha_k = \begin{cases}
    &\frac{B}{2C}\left((1-\frac{a}{2})\cos\left(\frac{\left(2k-1\right)2\pi}{K}\right) + \frac{a}{2}\cos\left(\frac{\left(2k+1\right)2\pi}{K}\right)\right), ||w-w_k||\geq ||w-w_{k+1}||,\\
    &\frac{B}{2C}\left((1-\frac{a}{2})\cos\left(\frac{\left(2k-1\right)2\pi}{K}\right) + \frac{a}{2}\cos\left(\frac{\left(2k-3\right)2\pi}{K}\right)\right), ||w-w_k||\leq ||w-w_{k+1}||,
    \end{cases}\\
    &P_{12b}\left(w\right) = \beta_k = \begin{cases}
    &\frac{B}{2C}\left((1-\frac{a}{2})\sin\left(\frac{\left(2k-1\right)2\pi}{K}\right) + \frac{a}{2}\sin\left(\frac{\left(2k+1\right)2\pi}{K}\right)\right), ||w-w_k||\geq ||w-w_{k+1}||,\\
    &\frac{B}{2C}\left((1-\frac{a}{2})\sin\left(\frac{\left(2k-1\right)2\pi}{K}\right) + \frac{a}{2}\sin\left(\frac{\left(2k-3\right)2\pi}{K}\right)\right), ||w-w_k||\leq ||w-w_{k+1}||,
    \end{cases}
    \end{aligned}
    \label{Pb}
\end{equation}
 where $a(w,\sigma)\in[0,1]$, $w\in C_k$, $k= 1,\cdots, K$ is an interpolation function, which is continuous and strictly monotonic about $||w-(w_k+w_{k+1})/2||$, increasing from  $0$  at the centre of $C_k$, to $1$ at the two vertices of $C_k$. Further, we take $0<\sigma\leq 1/2$ and $a(w,\sigma)\to 0$ as $\sigma\to 0$.
At the vertices $w = w_k$, we set $\mathbf{P}_b$ to be equal to the average of the two constant values on the two intersecting edges, and at the edge mid-point i.e., $w = (w_k+w_{k+1})/2$, we have strictly homeotropic conditions. 
%Substituting $a = 1$ and $z = w_k$ into \eqref{Pb}, we have
%\begin{align}
%&P_{11b}(w_k) = \frac{B}{4C}\left(\cos\left(\frac{\left(2k-1\right)2\pi}{K}\right) + \cos\left(\frac{\left(2k+1\right)2\pi}{K}\right)\right),\nonumber\\
%&P_{12b}(w_k) = \frac{B}{4C}\left(\sin\left(\frac{\left(2k-1\right)2\pi}{K}\right) + \sin\left(\frac{\left(2k+1\right)2\pi}{K}\right)\right).\nonumber
%\end{align}
%In the middle of $C_k$, $a = 0$. Substitute $a = 0$ and $w = (w_k+w_k+1)/2$ into \eqref{Pb}, we have
%\begin{gather}
%P_{11b}((w_k+w_k+1)/2) = \frac{B}{2C}\cos\left(\frac{\left(2k-1\right)2\pi}{K}\right), P_{12b}((w_k+w_k+1)/2) = \frac{B}{2C}\sin\left(\frac{\left(2k-1\right)2\pi}{K}\right)
%\end{gather}
As $\sigma\to 0$, the Dirichlet boundary conditions are piece-wise constant and compatible with strict homeotropic conditions
\begin{gather}\label{Pb_constant}
P_{11b}(w) = \hat{\alpha}_k = \frac{B}{2C}\cos\left(\frac{\left(2k-1\right)2\pi}{K}\right), \qquad P_{12b}(w) = \hat{\beta}_k = \frac{B}{2C}\sin\left(\frac{\left(2k-1\right)2\pi}{K}\right),\ w\in C_k.
\end{gather}

Further, the domain $E_K^C$ is unbounded and we impose uniform/constant boundary conditions at infinity as given below:
\begin{equation}\label{constraint_infty}
\lim_{|w|\to\infty} \mathbf{P}= \mathbf{P}^*:= \frac{B}{C}(\mathbf{n}^*\otimes \mathbf{n}^* - \frac{1}{2}\mathbf{I}_2),
\end{equation}
where $\mathbf{n}^* = (\cos(\gamma^*),\sin(\gamma^*))$ is the constant nematic director at infinity. To avoid confusion, we reiterate that $\gamma$ is the director angle and $\gamma^*$ is associated with the boundary condition at infinity.

The corresponding Euler-Lagrange equations are
\begin{equation}
    \begin{aligned}
        \Delta P_{11} &= \frac{2C\bar{\lambda}^2}{L}\left(P_{11}^2+P_{12}^2-\frac{B^2}{4C^2}\right)P_{11},\\
        \Delta P_{12} &= \frac{2C\bar{\lambda}^2}{L}\left(P_{11}^2+P_{12}^2-\frac{B^2}{4C^2}\right)P_{12}.
\end{aligned}
    \label{Euler_Lagrange}
\end{equation}
For the purposes of this paper, the crucial dimensionless parameter is
\[
\lambda^2 = \frac{ 2 C}{L} \bar{\lambda}^2
\]
where the nematic correlation length, $\xi_n \propto \sqrt{\frac{L}{C}}$ at the fixed temperature $A = -\frac{B^2}{3C}$. Recall that $L$ has units of \emph{Newton (N)} and $C$ has units of \emph{$N m^{-2}$}, so that $\xi_n$ has the units of length.

The admissible space is
\begin{align}
&\mathcal{H}_{\infty}:=\mathbf{P}^* + \mathcal{H},\\
&\mathcal{H}:=\left\{\mathbf{H}\in H^1_{loc}(E_K^C;\mathcal{S}_0):\int_{E_K^C}|\nabla \mathbf{H}|^2 + \int_{E_K^C}\frac{|\mathbf{H}|^2}{|w|^2}<\infty\right\}
\end{align}
where
\begin{equation}\nonumber
    \mathcal{S}_0:=\{\mathbf{P}\in M_2(\mathbb{R}): P_{ij} = P_{ji},tr(\mathbf{P}) = 0\}.
\end{equation}
The free energy \eqref{p_energy} is not always finite%everywhere finite on the space $\mathcal{H}_{\infty}$
, since the potential term $f_b\geq 0$ may very well not be integrable in $E_K^C$. However, since $f_b(\mathbf{P}^*) = 0$,  we may find a compactly supported $\mathbf{H}\in\mathcal{H}$, for which $\mathbf{P} = \mathbf{P}^* + \mathbf{H}$ satisfies the boundary condition \eqref{Pb}.
The existence of solution of \eqref{Euler_Lagrange} in $\mathcal{H}_{\infty}$, has been proven in Proposition 3 of \cite{Bronsard2016minimizers}. 

We study two distinguished limits in what follows---the $\lambda\to0$ limit which is relevant for polygon holes $E_K$ with edge length comparable to $\xi_n$ which is typically on the nanometre scale, and the $\lambda\to\infty$ limit, which is the macroscopic limit relevant for micron-scale or larger polygonal holes. In the following sections, we study the limiting problems and the limiting minimiser profiles, their defect sets and multistability in the $\lambda \to \infty$ limit.

\section{The $\lambda\to0$ limit}
\label{sec:lambda0}
In Theorem 1 of \cite{Bronsard2016minimizers}, as $\lambda\to 0$, the solution of \eqref{Euler_Lagrange} converges to the unique solution of \eqref{zero_euler} below, with Dirichlet boundary conditions. 
%The limiting problem for $\lambda=0$ is given by 
\begin{equation}
\begin{aligned}
    &\Delta P_{11}^0 = 0,\ \Delta P_{12}^0 = 0, on\ E_K^C,\\
    &P_{11}^0 = P_{11b},\ P_{12}^0 = P_{12b},\ on\ \partial E_K,\\
    &\mathbf{P} = \mathbf{P}^*,\ |x|\to\infty.
\end{aligned}
    \label{zero_euler}
\end{equation}
In other words, the limiting problem is a boundary-value problem for the Laplace equation for $\Pvec$ on $E^C_K$, in the $\lambda \to 0$ limit. This problem is explicitly solvable and in the following sections, we exploit the symmetries of the Laplace equation, the symmetries of the polygon and boundary conditions to illustrate how the limiting profile depends on $K$ - the number of polygon edges, and $\gamma^*$ - the director angle at infinity. In fact, these two parameters tune the existence, location and dimensionality of defects in this limit, amenable to experimental verification in due course.

\subsection{Defect patterns outside a 2D disc}
As an illustrative example, we first consider the limiting problem \eqref{zero_euler} on the complement of a disc. As $K\to\infty$, the domain, $E_K^C$, converges to the exterior of a disc, $D^C$.
%The Laplace equation in \eqref{zero_euler} can be easily solved on $D^C$.
The conformal mapping from unit disc, $D$, to exterior of disc, $D^C$, is given by
$$
w = f(z) = \frac{1}{z}.
$$
Under the mapping $f^{-1}:D^C\to D$, the limiting problem for $\lambda = 0$ is given by:
\begin{align}
&\Delta p_{11} = 0, \ \Delta p_{12} = 0,\ z\in D \nonumber \\
&p_{11} = \frac{B}{2C}\cos(-2\theta),\ p_{12} = \frac{B}{2C}\sin(-2\theta),\ z\in\partial D \nonumber \\
&p_{11} = \frac{B}{2C}\cos(2\gamma^*),\ p_{12} = \frac{B}{2C}\sin(2\gamma^*),\ z=0
\end{align}
where $z = r e^{i\theta}$, $\theta$ is the azimuthal angle and $r$ is the radius.
%\begin{equation}
%\theta = atan2(y,x),
%\end{equation}
%is the azimuthal angle. 
The corresponding solution  is 
\begin{align}\label{disc_zero}
p_{11}(re^{i\theta},\gamma^*) &= \frac{B}{2C}\left(r^2\cos(-2\theta) + \lim_{\epsilon\to 0}\frac{\cos(2\gamma^*)ln r}{ln \epsilon}\right),\nonumber \\
p_{12}(re^{i\theta},\gamma^*) &= \frac{B}{2C}\left(r^2\sin(-2\theta)  + \lim_{\epsilon\to 0}\frac{\sin(2\gamma^*)ln r}{ln \epsilon}\right).
\end{align}
This solution has the rotational symmetry property 
\begin{multline}
(p_{11},p_{12})(re^{i\theta},\gamma^*) = (p_{11}(re^{i\theta+i\gamma^*},0)\cos2\gamma^* - p_{12}(re^{i\theta+i\gamma^*},0)\sin 2\gamma^*,\\
p_{12}(re^{i\theta+i\gamma^*},0)\cos 2\gamma^* + p_{11}(re^{i\theta+i\gamma^*},0)\sin2\gamma^*),   
\end{multline}
%in what follows, $p_{11}^2 +p_{12}^2)\vert_{(r,\theta,\gamma^*)} = p_{11}^2 +p_{12}^2)\vert_{(r,\theta-\gamma^*,0)}$.
so that it suffices to assume $\gamma^* = 0$. %, in order to study the defect or zero set of $\mathbf{p}$. 
With $\gamma^* = 0$, we can check that $p_{11}=p_{12}=0$ at exactly two points, located at $\theta=\frac{\pi}{2}$ and $\theta = \frac{3\pi}{2}$ respectively. %the solution in \eqref{disc_zero} reduces to
%\begin{gather}\label{p11_disc}
%p_{11} = \frac{B}{2C}\left(r^2\cos(-2\theta) + \lim_{\epsilon\to 0}\frac{ln r}{ln \epsilon}\right),\ p_{12} = \frac{B}{2C}r^2\sin(-2\theta),
%\end{gather}
%The zero set of $p_{12}$ is $\{re^{i\theta}:\ \theta \in \{0,\pi/2,\pi,3\pi/2\},\ r\in[0,1]\}$. For $\theta = 0$ or $\theta = \pi$, both terms in $p_{11}$ are positive in \eqref{p11_disc} and subsequently, $p_{11}$ is always positive. For $\theta = \pi/2$ or $\theta = 3\pi/2$, the first and second term of $p_{11}$ in \eqref{p11_disc} monotonically decrease from $0$ to $-1$, and from $1$ to $0$ respectively, i.e., $p_{11}$ decreases monotonically from $1$ to $-1$, as $r$ increases from $0$ to $1$. Hence there exists a $r^*$, s.t. $p_{11}(r^*e^{i\pi/2},0) = p_{11}(r^*e^{i 3\pi/2},0) = 0$, i.e., there are two point defects along the given disc diameter. 
The corresponding limiting solution on $D^C$ is: %for the exterior of a 2D disc is
\begin{equation}\label{2d}
\mathbf{P}(w, \gamma^*) = \mathbf{p}(f^{-1}(w), \gamma^*) = s_+\left(\frac{1}{\rho^2}(\mathbf{e}_{\rho}\otimes \mathbf{e}_{\rho}-\mathbf{I}_2/2 )+\lim_{\epsilon\to 0}\frac{ln \rho}{ln 1/\epsilon}(\mathbf{n}^*\otimes \mathbf{n}^* -\mathbf{I}_2/2)\right)
\end{equation}
where $w = \rho e^{i\psi}$,$\mathbf{e}_{\rho}= (\cos\psi,\sin\psi)$ is the unit radial vector and the constraint at infinity is $\mathbf{n}^* = \left(\cos\gamma^*,\sin\gamma^*\right)$.

%Next, we compare this limiting 2D solution, in the $\lambda\to0$ limit, with the limiting solution for the complement of a 3D unit ball. With regards to the exterior of a 3D spherical colloidal particle, as the proof in \cite{Bronsard2016minimizers}, the minimizer of the 3D Landau--de Gennes free energy, $Q_{L,W}$ , converges to  
%\begin{equation}\label{3d}
%\mathbf{Q}_0 = s_+\left(\frac{1}{\rho^3}(\mathbf{e}_{\rho}\otimes \mathbf{e}_{\rho} -\mathbf{I}_3/3)+(1-\frac{1}{\rho})(\mathbf{n}^*\otimes \mathbf{n}^*-\mathbf{I}_3/3)\right)
%\end{equation}
%where $\mathbf{e}_{\rho} = (\sin\phi \cos\psi,\sin\phi \sin\psi, \cos\phi)$ is the unit normal and the constraint at infinity is $\mathbf{n}^* = 
%\left(\sin\phi^*\cos\gamma^*,\sin\phi^*\sin\gamma^*,\cos\phi^*\right)$, in the limit of strong anchoring strength and small particle radius.
%The solutions in \eqref{2d} and \eqref{3d} are a linear combination of the boundary condition on $\partial D$, and the constraint at infinity. The coefficients are proportional to $\frac{1}{\rho^2}$ and $ln\rho$ for the 2D solution in \eqref{2d}, and proportional to $\frac{1}{\rho^3}$ and $\frac{1}{\rho}$ for the 3D solution in \eqref{3d}, due to the different dimension of Laplace operator.
%The 3D solution, in (\ref{3d}), for the exterior of a 3D sphere, is called the Saturn ring quadrupole, which is known to be stable for small particles, as also reported in \cite{chernyshuk2011theory}.

This method can be easily generalised to piecewise constant boundary conditions on segments of $\partial D$, relevant for solving the limiting problem (\ref{zero_euler}) on $E^C_K$, as we shall see in the Section 3.3.
Consider the following boundary-value problem on the unit disc $D$,
\begin{align}\label{eq:u}
&\Delta u = 0,\ z\in D, \nonumber\\
&u = u_k,\ on\ z\in D_k,\ k = 1,\cdots, K,\nonumber \\
&u = u_0,\ on\ z = 0.
\end{align}
where 
\begin{equation}\label{D_k}
D_k = \{e^{i\theta},\theta\in(-2\pi k/K, -2\pi k/K + 2\pi/K)\},\ k = 1,...,K,
\end{equation}
are the segments of $\partial D$.
%How to solve the Laplace equation with Dirichlet boundary condition on $\partial D$ and constraint at the center?
The solution, $u$, can be written as, $u = u_a + u_b$, where $u_a$ and $u_b$ are defined by the following boundary-value problems:
\begin{align}\label{eq:ua}
&\Delta u_a = 0,\ z\in D,\nonumber \\
&u_a = u_k,\ on\ z\in D_k,\ k = 1,\cdots, K, \nonumber \\
&u_a = \frac{1}{2\pi}\sum_{k = 1}^K \int_{2\pi (K-k)/K}^{2\pi(K-k+1)/K}u_k d\phi\, on\  z=0,
\end{align} and
\begin{align}\label{eq:ub}
&\Delta u_b = 0,\ z\in D, \nonumber \\
&u_b = 0,\ on\ z\in D_k,\ k = 1,\cdots, K, \nonumber\\
&u_b = u_0-\frac{1}{2\pi}\sum_{k = 1}^K \int_{2\pi (K-k)/K}^{2\pi(K-k+1)/K}u_k d\phi\, on\  z=0.
\end{align} 
Using the Poisson integral, and standard separation of variables method for the Laplace equation on an annulus, with the radius of inner ring (denoted by $\epsilon$) approaching zero, the solution of \eqref{eq:u} is given by
\begin{equation}\label{u_solution}
u(re^{i\theta}) = \frac{1}{2\pi}\sum_{k = 1}^K\int_{2\pi (K-k)/K}^{2\pi(K-k+1)/K}u_k\frac{1-r^2}{1-2r\cos(\phi-\theta)+r^2}d\phi + \lim_{\epsilon\to 0}\frac{(u_0- \frac{1}{2\pi}\sum_{k = 1}^K \int_{2\pi (K-k)/K}^{2\pi(K-k+1)/K}u_k d\phi)ln r}{ln \epsilon}.
\end{equation}

\subsection{Schwarz--Christoffel mapping}
The  example of $D^C$ gives useful insights into the computation of the limiting profile for the exterior of a regular polygon, $E^C_K$.
The conformal mapping from $D$ to $D^C$ is straightforward to construct. The analogous conformal mapping from $D$ to $E_K^C$ is the following Schwarz--Christoffel mapping:
\begin{figure}
\centering
        \includegraphics[width=0.5\columnwidth]{./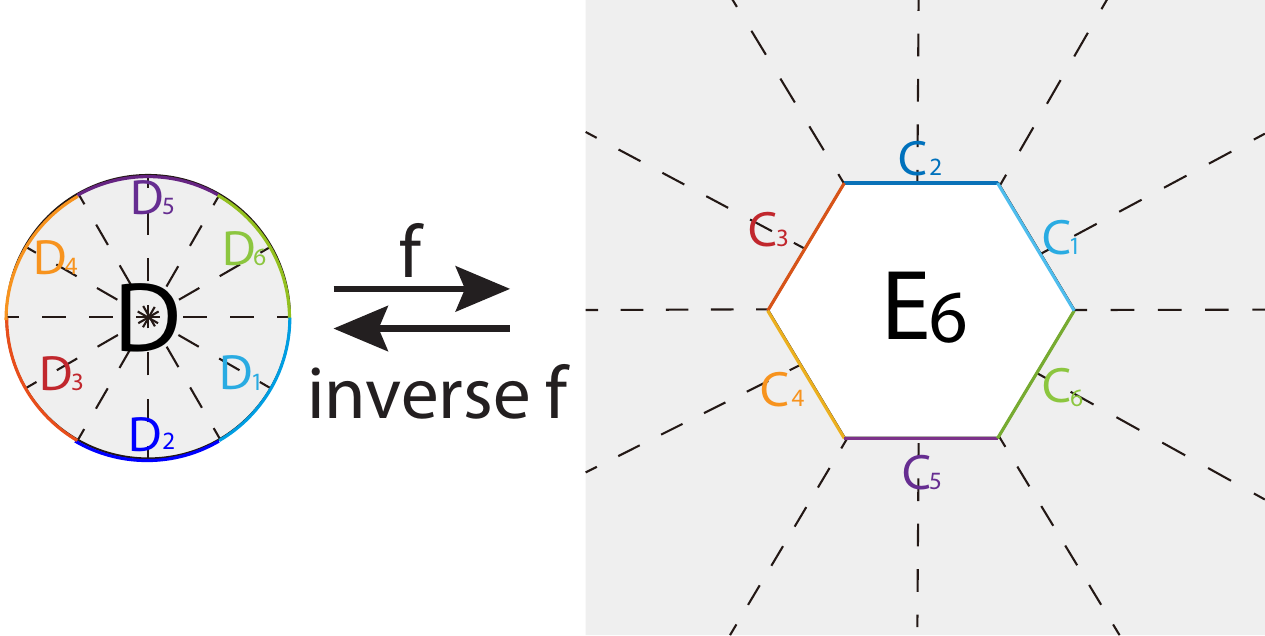}
        \caption{Schwarz-Christoffel mapping from unit disc to exterior of hexagon.}
        \label{fig:SC}
\end{figure}
The mapping from $D$ to $E_K^C$ (exterior of a regular polygon with $K$ sides of unit length since we have re-scaled the length variable) is defined by \cite{driscoll2002schwarz}
\begin{equation}
w = f(z) = A - C\int^z x^{-2}\prod_{k = 1}^K\left(1-\frac{x}{w_k} \right)^{1-\alpha_k} dx,\ \forall z\in D.
\end{equation}
Here $\alpha_k\pi$ is the exterior angle of $E_K^C$ (interior angle of polygon $E_K$) and $\omega_k$ are the polygon vertices.
In particular, for the mapping from the unit disc to $E^C_K$, $\alpha_k = 1-2/K$ $k = 1,\cdots,K$, when the first vertex is located at $w_1 = (1,0)$ and the Schwarz--Christoffel mapping is uniquely given by:
\begin{equation}\label{eq:SC}
w = f(z) = C(K)\int^z x^{-2}\left (1-x^K \right)^{2/K}dx,
\end{equation}
see Fig. \ref{fig:SC}.
The leading term of $f$ in \eqref{eq:SC} is $-C(K)z^{-1}$ and hence, as $z\to 0$, $f(z) \to \infty$ and $f$ is single-valued near the origin \cite{driscoll2002schwarz}.
The pre-factor, $C(K)$ is real and $|C(K)|$ is the capacity or transfinite diameter of the region $E_K^C$ \cite{driscoll2002schwarz}.
One can check that $f$ maps the circle, $\partial D$, onto the polygon boundary, $\partial E_K = f(\partial D)$ and the segments of $\partial D$ to the corresponding segments of $\partial E_K^C$, i.e., 
\begin{equation}
f(D_k) = C_k,
\end{equation}
where $D_k$, $k = 1,\cdots,K$ is defined in \eqref{D_k}, and 
% and maps the origin of $D$, to the infinity of $E_K^C$
 \begin{equation}
     f(0) = \infty
 \end{equation}.

The mapping, $f$, has the following properties
\begin{align}
f(\overline{z}) &=  C(K)\int^{\overline{z}} x^{-2}(1-x^K)^{2/K}dx = C(K)\int^z \overline{x}^{-2}(1-\overline{x}^K)^{2/K}dx = \overline{f(z)},\label{f_reflection}\\
%&arg((f(re^{\pi i/K-i\theta}) + f(re^{\pi i/K + i\theta}))/2) = \pi/K,\\
f(ze^{2\pi n i/K}) &= C(K)\int^{ze^{2\pi n i/K}} x^{-2}(1-x^K)^{2/K}dx\\ 
&= C(K)\int^{z} e^{-4\pi n i/K}x^{-2}(1-x^K)^{2/K} e^{2\pi n i/K}dx = e^{-2\pi n i/K} f(z),\label{f_rotation}
\end{align} which are useful when studying the properties of solutions of (\ref{zero_euler}) on $E^C_K$.

\subsection{The limiting problem for $E^C_K$ in terms of a boundary-value problem on $D$}
Under the Schwarz-Christoffel mapping $f^{-1}:E_K^C\to D$, the limiting problem \eqref{zero_euler} can be equivalently written in terms of a rLdG tensor, $\mathbf{p}$ defined on $D$, as shown below:
\begin{equation}
\begin{aligned}
    &\Delta p_{11}^0 = 0,\ \Delta p_{12}^0 = 0, on\ D,\\
    &p_{11}^0 = p_{11b},\ p_{12}^0 = p_{12b},\ on\ D_k,\\
    &p_{11}(0,0) = \frac{B}{2C}\cos(2\gamma^*),\ p_{12}(0,0) = \frac{B}{2C}\sin(2\gamma^*),\\
\end{aligned}
\label{eq:map_zero}
\end{equation}
The Dirichlet boundary conditions (\ref{Pb}) translate to boundary conditions, on $D_k$, $k = 1,\cdots, K$ defined in \eqref{D_k}, which are segments of $\partial D$; see below:
\begin{equation}
    \begin{aligned}
    &p_{11b} = \overline{\alpha}_k = \begin{cases}
    &\frac{B}{2C}\left((1-\frac{\overline{a}}{2})\cos\left(\frac{\left(2k-1\right)2\pi}{K}\right) + \frac{\overline{a}}{2}\cos\left(\frac{\left(2k+1\right)2\pi}{K}\right) \right),\ \theta\in[\frac{2\pi(K-k)}{K},\frac{2\pi(K-k+1)}{K}-\frac{\pi}{K}],\\
    &\frac{B}{2C}\left((1-\frac{\overline{a}}{2})\cos\left(\frac{\left(2k-1\right)2\pi}{K}\right) + \frac{\overline{a}}{2}\cos\left(\frac{\left(2k-3\right)2\pi}{K}\right)\right),\ \theta\in[\frac{2\pi(K-k+1)}{K}-\frac{\pi}{K},\frac{2\pi(K-k+1)}{K}],
    \end{cases}\\
    &p_{12b} = \overline{\beta}_k = \begin{cases}
    &\frac{B}{2C}\left((1-\frac{\overline{a}}{2})\sin\left(\frac{\left(2k-1\right)2\pi}{K}\right) + \frac{\overline{a}}{2}\sin\left(\frac{\left(2k+1\right)2\pi}{K}\right)\right),\ \theta\in[\frac{2\pi(K-k)}{K},\frac{2\pi(K-k+1)}{K}-\frac{\pi}{K}],\\
    &\frac{B}{2C}\left((1-\frac{\overline{a}}{2})\sin\left(\frac{\left(2k-1\right)2\pi}{K}\right) + \frac{\overline{a}}{2}\sin\left(\frac{\left(2k-3\right)2\pi}{K}\right)\right),\ \theta\in[\frac{2\pi(K-k+1)}{K}-\frac{\pi}{K},\frac{2\pi(K-k+1)}{K}],
    \end{cases}
    \end{aligned}
    \label{pb}
\end{equation}
where the interpolation function, $\overline{a}(z,\sigma)$, $\forall z\in \partial D$ satisfies $\overline{a}(e^{i(2K-2k+1)\pi/K},\sigma) = a((w_k+w_{k+1})/2,\sigma) = 0$ and $\overline{a}(e^{i2(K-k)\pi/K},\sigma) = a(w_{k+1},\sigma) = 1$. We assume $\overline{a}\in C^2$ and $\frac{\partial^2\overline{a}}{\partial \theta^2}\geq 0$.
As $\sigma\to 0$, $\overline{a}\to 0$, the Dirichlet boundary conditions approach $p_{11b} = \hat{\alpha}_k$, $p_{12b} = \hat{\beta}_k$ on $D_k$ uniformly, where $\hat{\alpha}_k$ and $\hat{\beta}_k$ are given in \eqref{Pb_constant}. 

For convenience, we extend the definition of $\overline{\alpha}_k$, $\overline{\beta}_k$, $k = 1,\cdots,K$, to $k\in\mathbb{Z}$ and use the periodicity of $\tan$, $\cos$ and $\sin$ to define
\begin{gather}
\overline{\alpha}_{k+nK} = \overline{\alpha}_k,\overline{\beta}_{k+nK} = \overline{\beta}_k,\ n\in\mathbb{Z}.
\end{gather}

One can check the following relations between $\overline{\alpha}_k$ and $\overline{\alpha}_{n+k}$($\overline{\alpha}_{K-k+1}$), $\overline{\beta}_k$ and $\overline{\beta}_{n+k}$($\overline{\beta}_{K-k+1}$):
\begin{align}
\overline{\alpha}_{n+k} &= \cos(4\pi n/K)\overline{\alpha}_k-\sin(4\pi n/K)\overline{\beta}_k,\label{alpha_rotation}\\
\overline{\beta}_{n+k} &= \sin(4\pi n/K)\overline{\alpha}_k+\cos(4\pi n/K)\overline{\beta}_k,\label{beta_rotation}\\
\overline{\alpha}_{K-k+1} &= \overline{\alpha}_k,\label{alpha_reflection}\\
\overline{\beta}_{K-k+1} &= -\overline{\beta}_k.\label{beta_reflection}
\end{align}
The constants $\hat{\alpha}_k$ and $\hat{\beta}_k$ have similar properties. 
From \eqref{beta_reflection}, we have 
\begin{align}
\int_{0}^{2\pi} p_{12b} d\theta &= \sum_{k = 1}^K \overline{\beta}_k = \frac{1}{2}\sum_{k = 1}^K\overline{\beta}_k+\frac{1}{2}\sum_{k = 1}^{K}\overline{\beta}_{K-k+1} = 0.\label{beta_zero}
\end{align}
From \eqref{beta_rotation} and \eqref{beta_zero}, we have
\begin{align}\label{p11_center_zero}
\int_{0}^{2\pi} p_{11b} d\theta &= \sum_{k = 1}^K \overline{\alpha}_k  = \sum_{k = 1}^K \frac{\overline{\beta}_{k+1}-cos(4\pi/K)\overline{\beta}_k}{\sin(4\pi/K)} \nonumber\\
&= \frac{1}{\sin(4\pi/K)}\sum_{k = 1}^K \overline{\beta}_{k+1}-\frac{1}{tan(4\pi/K)}\sum_{k = 1}^K \overline{\beta}_{k} = 0,\ for\ K\neq4.
\end{align}
Additionally, for $K =4$, $\overline{\alpha}_k = 0$, $k = 1,\cdots,4$, i.e., $\int_{0}^{2\pi} p_{11b} d\theta = 0$.

Hence, following the solution of \eqref{eq:u} with Dirichlet boundary conditions on $\partial D$, and constraint at origin in \eqref{u_solution}, and the results in \eqref{beta_zero}--\eqref{p11_center_zero}, the solution of \eqref{eq:map_zero} is given by
\begin{align}
p_{11}(r,\theta,\gamma^*) 
& =  \frac{1}{2\pi}\sum_{k = 1}^K\int_{2\pi (K-k)/K}^{2\pi(K-k+1)/K}\overline{\alpha}_{k}\frac{1-r^2}{1-2r\cos(\phi-\theta)+r^2}d\phi + \lim_{\epsilon\to 0}\frac{B}{2C}\frac{\cos(2\gamma^*)ln r}{ln \epsilon}\nonumber \\
& =  \frac{1}{2\pi}\sum_{k = K}^1\int_{2\pi (k-1)/K}^{2\pi k/K}\overline{\alpha}_{K-k+1}\frac{1-r^2}{1-2r\cos(\phi-\theta)+r^2}d\phi + \lim_{\epsilon\to 0}\frac{B}{2C}\frac{\cos(2\gamma^*)ln r}{ln \epsilon}\nonumber \\
& =  \frac{1}{2\pi}\sum_{k = 1}^K\int_{2\pi(k-1)/K}^{2\pi k/K}\overline{\alpha}_k\frac{1-r^2}{1-2r\cos(\phi-\theta)+r^2}d\phi + \lim_{\epsilon\to 0}\frac{B}{2C}\frac{\cos(2\gamma^*)ln r}{ln \epsilon}\label{zero_solution_p11}\\
p_{12}(r,\theta,\gamma^*) & =  \frac{1}{2\pi}\sum_{k = 1}^K\int_{2\pi (K-k)/K}^{2\pi(K-k+1)/K}\overline{\beta}_{k}\frac{1-r^2}{1-2r\cos(\phi-\theta)+r^2}d\phi + \lim_{\epsilon\to 0}\frac{B}{2C}\frac{\sin(2\gamma^*)ln r}{ln \epsilon}\nonumber\\
& =  \frac{1}{2\pi}\sum_{k = 1}^K\int_{2\pi (k-1)/K}^{2\pi k/K}\overline{\beta}_{K-k+1}\frac{1-r^2}{1-2r\cos(\phi-\theta)+r^2}d\phi + \lim_{\epsilon\to 0}\frac{B}{2C}\frac{\sin(2\gamma^*)ln r}{ln \epsilon}\nonumber\\
& =  -\frac{1}{2\pi}\sum_{k = 1}^K\int_{2\pi(k-1)/K}^{2\pi k/K}\overline{\beta}_k\frac{1-r^2}{1-2r\cos(\phi-\theta)+r^2}d\phi + \lim_{\epsilon\to 0}\frac{B}{2C}\frac{\sin(2\gamma^*)ln r}{ln \epsilon}\label{zero_solution_p12}.
\end{align}
In the above, we use (\ref{alpha_reflection}) and (\ref{beta_reflection}), along with standard changes of variables i.e. $s - 1 = K - k$, and swap the summation variable from $s$ to $k$ etc. The following proposition is crucial for restricting $\gamma^*$ to a specified range, for a given $K$, using rotation and reflection symmetries, the proof of which deferred to an appendix, so as not to detract from the main details.
The following corollaries describe the symmetry properties of the limiting solution $\mathbf{P}$ on $E_K^C$, for any $K$.

\begin{proposition}\label{gamma_infinity_restriction}
We can restrict $\gamma^*\in[0,\frac{\pi}{K}]$, since there are rotation relations between\\ $(p_{11}, p_{12})\vert_{(re^{i\theta+2\pi ki/K},\gamma^*-\frac{2\pi k}{K})}$, $k = 1,\cdots,K$, and $(p_{11}, p_{12})\vert_{(re^{i\theta},\gamma^*)}$, and reflection relations between $(p_{11}, p_{12})\vert_{(re^{i\theta},\gamma^*)}$ and $(p_{11}, p_{12})\vert_{(re^{-i\theta},-\gamma^*)}$.
\end{proposition}
\begin{proof} See Appendix.
\end{proof}
\begin{corollary}\label{remark1}
 If $K$ is even, using \eqref{p11_rotation} and \eqref{p12_rotation} with $n = K/2$, we have the symmetry property $\mathbf{p}(re^{i\theta+i\pi},\gamma^*) = \mathbf{p}(re^{i\theta},\gamma^*)$. Due to the property of the SC mapping $f$ in \eqref{f_rotation}, $f(re^{i\theta+i\pi}) = e^{-i\pi}f(re^{i\theta}) = e^{i\pi}f(re^{i\theta})$, the corresponding $\mathbf{P}$, defined on $E_K^C$, has the symmetry property: $\mathbf{P}(\rho e^{i\psi+i\pi},\gamma^*) = \mathbf{P}(\rho  e^{i\psi},\gamma^*)$.
\end{corollary}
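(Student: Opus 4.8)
The plan is to obtain the corollary as a direct specialisation of the rotation identities \eqref{p11_rotation}--\eqref{p12_rotation} established in Proposition~\ref{gamma_infinity_restriction}. First I would set $n=K/2$ in those identities; this is legitimate precisely because $K$ is even, so that $K/2\in\mathbb{Z}$ and the periodically extended boundary data $\overline{\alpha}_{k+n}$, $\overline{\beta}_{k+n}$ make sense. For $n=K/2$ one has $e^{2\pi n i/K}=e^{i\pi}=-1$ and $4\pi n/K=2\pi$, hence $\cos(4\pi n/K)=1$ and $\sin(4\pi n/K)=0$. Substituting these values into \eqref{p11_rotation} and \eqref{p12_rotation} collapses both right-hand sides and yields
\[
p_{11}(re^{i\theta+i\pi},\gamma^*-\pi)=p_{11}(re^{i\theta},\gamma^*),\qquad p_{12}(re^{i\theta+i\pi},\gamma^*-\pi)=p_{12}(re^{i\theta},\gamma^*).
\]

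Next I would eliminate the shift $\gamma^*\mapsto\gamma^*-\pi$. Inspecting the explicit solution formulas \eqref{zero_solution_p11}--\eqref{zero_solution_p12}, the parameter $\gamma^*$ enters only through the factors $\cos(2\gamma^*)$ and $\sin(2\gamma^*)$ multiplying the $\ln r/\ln\epsilon$ terms, and these are $\pi$-periodic in $\gamma^*$; therefore $\mathbf{p}(z,\gamma^*-\pi)=\mathbf{p}(z,\gamma^*)$ for every $z\in D$. Combining this with the previous display gives the disc-picture identity $\mathbf{p}(re^{i\theta+i\pi},\gamma^*)=\mathbf{p}(re^{i\theta},\gamma^*)$.

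Finally I would transport the identity to $E_K^C$ through the Schwarz--Christoffel map $f$. Writing $w=\rho e^{i\psi}=f(z)$ with $z=f^{-1}(w)\in D$, the rotation property \eqref{f_rotation} with $n=K/2$ gives $f(ze^{i\pi})=e^{-i\pi}f(z)=-w$, so that $f^{-1}(-w)=ze^{i\pi}$ (using that $f$ is single-valued on $D$). Hence
\[
\mathbf{P}(\rho e^{i\psi+i\pi},\gamma^*)=\mathbf{p}(f^{-1}(-w),\gamma^*)=\mathbf{p}(ze^{i\pi},\gamma^*)=\mathbf{p}(z,\gamma^*)=\mathbf{P}(\rho e^{i\psi},\gamma^*),
\]
which is the asserted symmetry.

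I do not expect a genuine obstacle: the statement is a clean corollary of Proposition~\ref{gamma_infinity_restriction} and everything reduces to bookkeeping. The points deserving a little care are (i) verifying that $n=K/2$ is an admissible index in \eqref{p11_rotation}--\eqref{p12_rotation}, which is where the hypothesis that $K$ is even is actually used (equivalently, that the boundary data $\overline{\alpha}_k,\overline{\beta}_k$ is $K/2$-periodic in $k$); (ii) the innocuous-looking replacement $\gamma^*-\pi\to\gamma^*$, which must be justified from the $\pi$-periodicity of $\gamma^*\mapsto(\cos 2\gamma^*,\sin 2\gamma^*)$ rather than from any symmetry of the polygon; and (iii) applying the transformation rule for $f$ on $D$, where it is single-valued, so that $f^{-1}(-w)$ is unambiguous.
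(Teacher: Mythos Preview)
Your proposal is correct and follows exactly the approach indicated in the paper: the corollary is stated with its proof sketch built in (``using \eqref{p11_rotation} and \eqref{p12_rotation} with $n=K/2$'' together with \eqref{f_rotation}), and you have simply filled in the computations. You are in fact slightly more careful than the paper in explicitly justifying the passage from $\gamma^*-\pi$ to $\gamma^*$ via the $\pi$-periodicity of $(\cos 2\gamma^*,\sin 2\gamma^*)$ in \eqref{zero_solution_p11}--\eqref{zero_solution_p12}, a step the paper leaves implicit.
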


\begin{corollary}\label{0_symmetry}
From \eqref{p11_reflection} and \eqref{p12_reflection} with $\gamma^* = 0$, we have
\begin{gather}
p_{11}(re^{-i\theta},0) = p_{11}(re^{i\theta},0),\qquad p_{12}(re^{-i\theta},0) = -p_{12}(re^{i\theta},0).
\end{gather}
The SC mapping in \eqref{eq:SC} preserves reflection symmetry, $f(re^{-i\theta}) = \overline{f(re^{i\theta})}$.
We have $(P_{11},P_{12})(\rho e^{\psi i},0) = (p_{11},p_{12})(f^{-1}(\rho e^{\psi i}),0)$, for $w=\rho e^{\psi i}\in E_K^C$, $(P_{11},P_{12})$ has reflection symmetry about $\psi = 0$, i.e.,
\begin{gather}
P_{11}(\rho e^{-i\psi},0) = P_{11}(\rho e^{i\psi},0),\qquad P_{12}(\rho e^{-i\psi},0) = -P_{12}(\rho e^{i\psi},0).
\end{gather}
\end{corollary}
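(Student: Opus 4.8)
The plan is to obtain Corollary~\ref{0_symmetry} in two short steps: first specialise the reflection identities of Proposition~\ref{gamma_infinity_restriction} to $\gamma^*=0$, and then transport the resulting symmetry from the unit disc $D$ to the exterior domain $E_K^C$ through the Schwarz--Christoffel map $f$.

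For the first step I would simply set $\gamma^*=0$ in \eqref{p11_reflection} and \eqref{p12_reflection}. Since $-0=0$, the arguments $(re^{-i\theta},-\gamma^*)$ and $(re^{i\theta},\gamma^*)$ collapse to $(re^{-i\theta},0)$ and $(re^{i\theta},0)$, giving immediately $p_{11}(re^{-i\theta},0)=p_{11}(re^{i\theta},0)$ and $p_{12}(re^{-i\theta},0)=-p_{12}(re^{i\theta},0)$. Equivalently one can re-read the computation in \eqref{p11_reflection}: the $\ln r/\ln\epsilon$ term is untouched by $\theta\mapsto-\theta$ because its coefficient depends only on $\cos 2\gamma^*=1$, the Poisson-kernel part is invariant under the edge relabelling $\overline{\alpha}_k\leftrightarrow\overline{\alpha}_{K-k+1}$ by \eqref{alpha_reflection}, and the sign change for $p_{12}$ is exactly \eqref{beta_reflection}. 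So this step is essentially free.

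For the second step, recall $(P_{11},P_{12})(w,0)=(p_{11},p_{12})(f^{-1}(w),0)$ for $w\in E_K^C$. Property \eqref{f_reflection} says $f(\bar z)=\overline{f(z)}$, i.e. $f$ commutes with complex conjugation; since $f:D\to E_K^C$ is a conformal bijection (single-valued near the origin, with $f(\partial D)=\partial E_K$, as recorded from \cite{driscoll2002schwarz}), its inverse also commutes with conjugation, $f^{-1}(\bar w)=\overline{f^{-1}(w)}$. Writing $w=\rho e^{i\psi}$ and $z=re^{i\theta}:=f^{-1}(w)$, we get $f^{-1}(\rho e^{-i\psi})=re^{-i\theta}$, whence $P_{11}(\rho e^{-i\psi},0)=p_{11}(re^{-i\theta},0)=p_{11}(re^{i\theta},0)=P_{11}(\rho e^{i\psi},0)$ and $P_{12}(\rho e^{-i\psi},0)=p_{12}(re^{-i\theta},0)=-p_{12}(re^{i\theta},0)=-P_{12}(\rho e^{i\psi},0)$, which is the asserted reflection symmetry of $\mathbf{P}$ about the real axis.

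There is no real obstacle here; the one point deserving a line of care is the legitimacy of inverting \eqref{f_reflection}, i.e. ensuring that $f^{-1}$ is globally well-defined and single-valued on $E_K^C$ so that the identity $f^{-1}(\bar w)=\overline{f^{-1}(w)}$ is meaningful. This is guaranteed by the conformality of the Schwarz--Christoffel map \eqref{eq:SC} together with the single-valuedness near the origin already quoted from \cite{driscoll2002schwarz}; everything else is bookkeeping.
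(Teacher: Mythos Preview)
Your proof is correct and follows exactly the approach the paper intends: the corollary is stated without a separate proof because it is an immediate specialisation of \eqref{p11_reflection}--\eqref{p12_reflection} at $\gamma^*=0$ together with the conjugation property \eqref{f_reflection} of the Schwarz--Christoffel map, which is precisely what you have written out.
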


\begin{corollary}\label{piK_symmetry}
With $\gamma^* = \pi/K$, $(P_{11},P_{12})$ has reflection symmetry about $\psi = \pi/K$, i.e.,
\begin{align}
P_{11}(\rho e^{\pi/K i-\psi i},\pi/K) &= P_{11}(\rho e^{\pi/K i+ \psi i},\pi/K)\cos(4\pi/K) + P_{12}(\rho e^{\pi/K i+ \psi i},\pi/K)\sin(4\pi/K),\\
P_{12}(\rho e^{\pi/K i-\psi i},\pi/K) &= -P_{12}(\rho e^{\pi/K i+ \psi i},\pi/K)\cos(4\pi/K) + P_{11}(\rho e^{\pi/K i + \psi i},\pi/K)\sin(4\pi/K).
\end{align}
\end{corollary}
\begin{proof} See Appendix.
\end{proof}

In the next sub-sections, we apply these results to $K=3$, $K=4$ and generic $E^C_K$ with $K>4$. These specific examples demonstrate the interplay between $K$ and $\gamma^*$, and how this can be exploited to tailor defect sets in reduced 2D problems.

\subsubsection{The limiting profile for $E^C_3$ - exterior of an equilateral triangle}

We first consider the solution of (\ref{zero_euler}) on $E^C_3$- the complement of a regular, re-scaled equilateral triangle with homeotropic boundary conditions. Recall the SC mapping from the unit disc $D$ to $E^C_3$ given by $f(z) = f(re^{i\theta }) = \rho e^{i \psi} = w$.

From Corollary \ref{0_symmetry}, for any $K$, with $\gamma^* = 0$, we have $p_{12}(r,0) \equiv 0$ on $\theta=0$ or $\pi$. The boundary condition in \eqref{pb}, $p_{12b} = \beta_k\leq 0$ with $K=3$ on $\partial D\cap \{\theta\in[0,\pi]\}$. From the maximum principle for the Laplace equation on $D\cap\{\theta\in[0,\pi]\}$, $p_{12}= 0$ if and only if $\theta = 0$ or $\pi$. Analogously, on $D\cap\{\theta\in[\pi,2\pi]\}$, $p_{12}=0$ if and only if $\theta = \pi$ or $2\pi$. In conclusion, defects can only appear on the diameter $z = r$, $r\in(-1,1)$. 
Using \eqref{p11_center_zero}, we obtain $p_{11}(0,\gamma^*)-\frac{B}{2C}\cos(2\gamma^*)c(r) = \frac{1}{2\pi}\int_0^{2\pi}p_{11b}d\theta = 0$, where $c(r) = \lim_{\epsilon\to 0}\frac{ln r}{ln \epsilon}$.
For $K = 3$, with $\gamma^* = 0$, as in Lemma 4.5 in \cite{canevari2017order}, we show that as $r$ increases from $0$ to $1$, $p_{11}(r,0)-\frac{B}{2C}c(r)$ monotonically decreases from $0$ to $\overline{\alpha}_1 = \frac{B}{2C}\left((1-\overline{a}/2)\cos(2\pi/3) + \overline{a}/2\cos(-2\pi/3)\right) = \frac{B}{2C}\cos(2\pi/3) = -\frac{B}{4C}$. The term, $\frac{B}{2C}c(r)$, monotonically decreases from $\frac{B}{2C}$ to $0$. Hence, $p_{11}(r,0)$ monotonically decreases from $B/2C$ to $-B/4C$. There exists a $r^*$ such that $(p_{11},p_{12})(r^*,0) = 0$, i.e. there is a defect on $\theta = 0$. As $r$ increases from $0$ to $1$, $p_{11}(-r,0)-\frac{B}{C}c(r)$ is monotonically increasing from $0$ to $\overline{\alpha}_2 = \frac{B}{2C}(1-\overline{a}/2)\cos(\pi) = \frac{B}{2C}$, since $\overline{a} = 0$ at the middle of $D_2$. So with $\gamma^*=0$, under the SC mapping $f^{-1}$, the defect set of the limiting profile consists of a single isolated point, on $E^C_3$.

Using the rotation-based relations in \eqref{p11_rotation}, with $\gamma^* = 0$, $\theta =0$, $n = 1$, and $(p_{11},p_{12})(r^*,0) = 0$, we obtain $(p_{11},p_{12})(r^*e^{i2\pi/3}, \pi/3)=(p_{11},p_{12})(r^*e^{i2\pi/3},-2\pi/3) = 0$, i.e.,  for $\gamma^* = \pi/3$, there is a unique defect on $\theta = 2\pi/3$, and consequently for $\psi = -2\pi/3$ in the limiting profile $(P_{11},P_{12})(\rho e^{i\psi},\gamma^*)$.
For $\gamma^* \in \left(0, \pi/3 \right)$, the defect smoothly rotates between $\psi=0$ and $\psi = -2 \pi/3$ in the limiting profile (see Fig. \ref{fig:triangle}) and this is a good example of how $\gamma^*$ can tune the location of the defect for the limiting profile. %In Fig. \ref{fig:triangle_error}, we compare the numerical solution for the limiting problem \eqref{zero_euler} to the solution of \eqref{Euler_Lagrange} on $E_3^C$, for sufficiently small $\lambda^2$ .
In the first row of Fig.~\ref{fig:triangle}, we plot the solution of \eqref{eq:map_zero}, for $K=3$. The domain is a unit disc $D$, since we nondimensionalized the rLdG energy in \eqref{rLdG_energy}. %The explicit analytic solution of (3.19) is given in (3.29) and (3.30). But here 
%We solve the equations, \eqref{eq:map_zero} using 
%an open-source computing software FEniCS \cite{olgg2012fenics} for the finite element method, in the first order Lagrange element function space by using
%Newton’s method with tolerance $10^{-13}$ and mesh size less than $1/256$. 
In the second row of Fig.~\ref{fig:triangle}, we plot the corresponding solution of \eqref{zero_euler} on $E_3^C$, and again the polygonal hole is re-scaled to have unit length. The computational domain is the same for all values of $\lambda$, because of the rescaling in \eqref{p_energy}. In Fig.~\ref{fig:triangle} and subsequent figures, the color bar labels the order parameter, $s = \sqrt{p_{11}^2 + p_{12}^2}$, and the white lines correspond to the nematic director $\mathbf{n} = (\cos(arctan(p_{12}/p_{11})/2),\sin(arctan(p_{12}/p_{11})/2))$, where $p_{11}$ and $p_{12}$ are the two scalar fields in \eqref{eq:map_zero}. In Fig.~\ref{fig:triangle_error}, we plot the difference between the limiting solution of \eqref{zero_euler} and the solution of the Euler-Lagrange equations \eqref{Euler_Lagrange}, with $K=3$ and $\lambda^2 = 0.01$. The error is proportional to $\lambda^2$ and we deduce that the limiting solution is a good approximation to the solutions of \eqref{Euler_Lagrange} for small values of $\lambda^2$. This can also be formally justified using further asymptotic analysis. The remaining figures in this Section have been computed with the same numerical method, with the same details. The numerical details are given in Section 4.
\begin{figure}
\centering
        \includegraphics[width=0.5\columnwidth]{./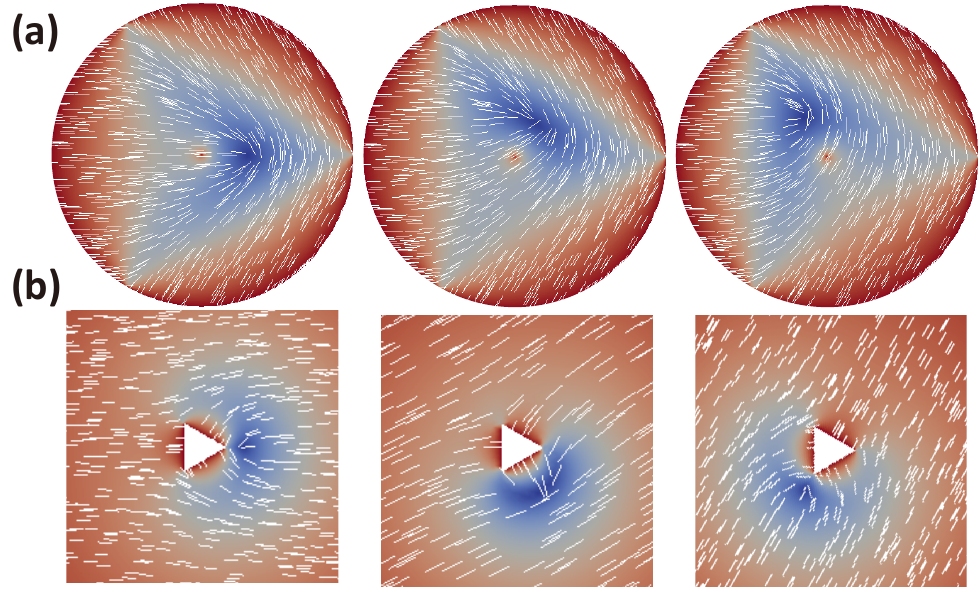}
        \caption{(a) The mapped solution of \eqref{eq:map_zero} on $D$ with $K = 3$ and (b) the numerical  limiting solution of \eqref{zero_euler} on $[-10,10]^2\backslash E_3$ as an approximation to $E_3^C$. From left to right, the boundary condition at infinity is $\gamma^*=0$, $\pi/6$, and $\pi/3$. In this figure and all subsequent figures where the value of $B$ and $C$ are required, we have $B = 0.64 \times 10^4 N/m^2$ and $C = 0.35 \times 10^4 N/m^2$. The colour bar is the scalar order parameter, $s = \sqrt{p_{11}^2 + p_{12}^2}$, and the white lines correspond to the nematic director $\mathbf{n} = (\cos(arctan(p_{12}/p_{11})/2),\sin(arctan(p_{12}/p_{11})/2))$.
}
        \label{fig:triangle}
\end{figure}
%On the exterior of a regular triangle, with $\gamma^* = 0$, there is an isolated $+1/2$ defect on $\psi = 0$ (Fig. \ref{fig:triangle}); with $\gamma^* = \pi/3$, the defect is on $\psi = -2\pi/3$ (Fig. \ref{fig:triangle}). When $\gamma^*\in(0,\pi/3)$, the unique defect is on the area $\psi\in(0,-2\pi/3)$.
\begin{figure}
\centering
        \includegraphics[width=0.8\columnwidth]{./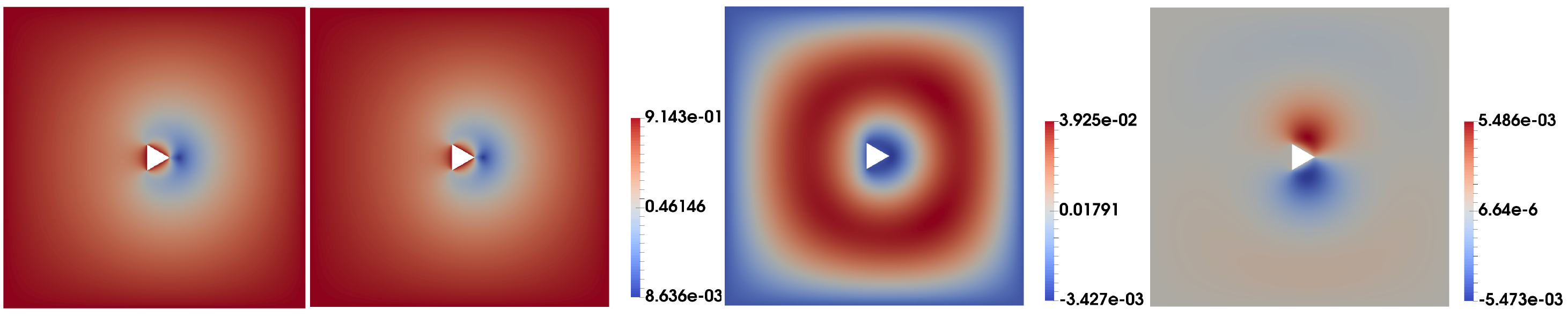}
        \caption{From left to right: the nematic order, $s$, of the limiting solution $(P_{11}^0,P_{12}^0)$ for \eqref{zero_euler} on $[-10,10]^2\backslash E_3$; the solution $(P_{11}^1,P_{12}^1)$ of Euler-Lagrange equations \eqref{Euler_Lagrange} with $\lambda^2 = 0.01$; comparison between the exact and limiting solutions: $|\mathbf{P}^1|^2/2-|\mathbf{P}^0|^2/2 = (s^1)^2-(s^0)^2$; the difference in the nematic director $P^1_{11}P^0_{12}  -  P^1_{12}P^0_{11} = s^0s^1\sin(2\gamma^0 - 2\gamma^1)$.
        } 
        \label{fig:triangle_error}
\end{figure}

\subsubsection{The limiting profile for $E^C_4$ - exterior of a unit square}
In \cite{kraljmajumdar2014}, \cite{canevari2017order}, the authors show that the unique limiting rLdG minimiser on $E_4$ with tangent boundary conditions, is the WORS solution in the $\lambda \to 0$ limit; the label ``WORS" has been proposed by the authors in recognition of the fact that the WORS supports two nodal or defect lines along the square diagonals and is an example of widely studied \emph{order reconstruction solutions in LdG theory. The interested reader is referred to these papers, \cite{kraljmajumdar2014}, \cite{canevari2017order}, for more details.} We study the analogous problem on $E_4^C$ with homeotropic boundary conditions, to study whether line defects survive on $E_4^C$ for different choices of $\gamma^*$. % \cite{han2020pol}, the authors study the limiting profiles for rLdG minimizers on $E_K$ i.e. the interior of $K$-regular polygons with tangent boundary conditions on the polygon edges, as $\lambda \to 0$. They find the universal \emph{Ring} solution for all $K>4$ and in the special case of $K=4$, the authors report that the limiting profile is the WORS (Well Order Reconstruction Solution) with two defect lines along the square diagonals. Hence, we consider the case $K=4$ separately and study whether the line defects survive on $E_4^C$, with homeotropic boundary conditions and different choices of $\gamma^*$. We note that the boundary conditions (\ref{Pb}) necessarily imply that there are $4$ defects at the square vertices, on $E^C_4$. % for the limiting solution in regular polygons, square is a special case. The WORS on square is featured by two line defects on diagonals. So here we study the defect pattern on exterior square seperately.

\begin{proposition}\label{square}
The solution of (\ref{zero_euler}) on $E^C_4$ has two line defects for $\gamma^* = \pi/4 + n\pi/2$, $n\in\mathbb{Z}$, and no interior defects otherwise.
\end{proposition}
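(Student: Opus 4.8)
The plan is to transfer the limiting problem \eqref{zero_euler} on $E^C_4$ to the disc problem \eqref{eq:map_zero} through the Schwarz--Christoffel map $f$, and then to exploit the degeneracy of the $K=4$ boundary datum for $p_{11}$. As already noted below \eqref{p11_center_zero}, $\overline{\alpha}_k=0$ for every $k$ when $K=4$, since $\cos((2k-1)\pi/2)=\cos((2k+1)\pi/2)=\cos((2k-3)\pi/2)=0$. Substituting $\overline{\alpha}_k\equiv0$ into \eqref{zero_solution_p11} kills the Poisson-integral term, leaving the purely radial expression
$$p_{11}(r,\theta,\gamma^*)=\frac{B}{2C}\cos(2\gamma^*)\,c(r),\qquad c(r):=\lim_{\epsilon\to0}\frac{\ln r}{\ln\epsilon},$$
where $c$ decreases monotonically from $c(0)=1$ to $c(1)=0$, so $c>0$ on $(0,1)$. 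By \eqref{zero_solution_p12}, $p_{12}$ is a harmonic function on $D$ whose boundary datum equals $\pm\tfrac{B}{2C}$ at the mid-points of the arcs $D_1,\dots,D_4$, with the signs alternating, plus the radial correction $\tfrac{B}{2C}\sin(2\gamma^*)\,c(r)$.

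From here the dichotomy is essentially algebraic. A defect is a common zero of $p_{11}$ and $p_{12}$. If $\gamma^*\neq\pi/4+n\pi/2$ then $\cos(2\gamma^*)\neq0$, so $p_{11}$ keeps the constant sign of $\cos(2\gamma^*)$ and is nowhere zero on the open disc; hence there is no interior defect, and since $f$ is a conformal bijection of $D$ onto $E^C_4$ there is none on $E^C_4$ either — the only zeros of $\mathbf{p}$ are the four vertex pre-images on $\partial D$, which $f$ sends to the four square vertices. If $\gamma^*=\pi/4+n\pi/2$ then $\cos(2\gamma^*)=0$, so $p_{11}\equiv0$ on $D$ and the defect set coincides with the nodal set $\{p_{12}=0\}$.

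It remains to count this nodal set in the degenerate case. By the rotation relations \eqref{p11_rotation}--\eqref{p12_rotation} with $n=1$ (a genuine symmetry, since the square is invariant under a $\pi/2$ rotation) it suffices to treat $\gamma^*=\pi/4$, the remaining values following by rotating $D$ by $n\pi/2$, hence $E^C_4$ by $-n\pi/2$ via \eqref{f_rotation}. For $\gamma^*=\pi/4$, $p_{12}$ is harmonic, nonconstant, and its boundary datum changes sign exactly four times, vanishing only at the four vertex pre-images; therefore its nodal set meets $\partial D$ precisely at those four points, contains no closed loop (a nodal loop would force $p_{12}\equiv0$ by the maximum principle), and is a finite union of real-analytic arcs inside $D$. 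I would then exclude an interior nodal branch point: by the reflection symmetry of $p_{12}$ about the axis $\psi=\pi/4$ (Corollary~\ref{piK_symmetry}) together with Corollary~\ref{remark1}, the branch point would have to sit at the centre of $D$, whereas $p_{12}(0)=\tfrac{B}{2C}\,c(0)=\tfrac{B}{2C}\neq0$ (the Poisson part averages to zero by \eqref{beta_zero}). Consequently the four boundary zeros are joined by exactly two disjoint analytic arcs, separating the origin — where $p_{12}>0$ — from the two opposite boundary arcs on which the datum has the opposite sign. Pushing forward by $f$, these become two analytic curves in $E^C_4$, each pinned to one of a pair of opposite square edges and joining its two endpoints: the claimed two line defects.

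The step I expect to be the real obstacle is the ``exactly two'' count for $\gamma^*=\pi/4$, that is, ruling out interior branch points and fixing which non-crossing pairing of the four boundary zeros occurs. The sketch above rests on the precise four-sign-change structure of the boundary datum, the nonzero value of $p_{12}$ at the origin, and the dihedral symmetry at $\gamma^*=\pi/4$; a fully rigorous version would cut $D$ along the pre-images of the two symmetry axes and invoke the maximum principle on each of the four resulting sectors (on each sector $p_{12}$ carries a single transversal nodal arc), in the spirit of the monotonicity argument already used for $E^C_3$. A minor, purely bookkeeping point is to carry the formal radial term $c(r)$ consistently through \eqref{zero_solution_p11}--\eqref{zero_solution_p12}, done exactly as in the disc computation \eqref{disc_zero}--\eqref{p11_disc}.
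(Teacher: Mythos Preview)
Your treatment of the non-degenerate case $\cos(2\gamma^*)\neq 0$ is correct and essentially identical to the paper's: once $\overline{\alpha}_k\equiv 0$ forces $p_{11}=\tfrac{B}{2C}\cos(2\gamma^*)c(r)$, positivity of $c$ on $(0,1)$ kills all interior zeros. For $\gamma^*=\pi/4$, however, your nodal-set argument has a genuine gap. The step ``by symmetry the branch point would have to sit at the centre'' only works once you know there is \emph{at most one} interior branch point; the $\pi$-rotation of Corollary~\ref{remark1} alone does not exclude a symmetric pair of off-centre branch points (each of degree $4$), and Corollary~\ref{piK_symmetry} as stated mixes $P_{11}$ and $P_{12}$ on $E^C_4$ rather than giving a clean reflection for $p_{12}$ on $D$. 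So the exclusion of an ``X''-shaped nodal set is not yet justified, and with it the pairing of the four boundary zeros.

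The paper bypasses this topology entirely. It first shows, using the reflection and rotation relations \eqref{p12_reflection}, \eqref{p12_rotation}, that $p_{12}(re^{i\theta},\pi/4)-\tfrac{B}{2C}c(r)$ vanishes identically on the two diameters $\theta=\pi$ and $\theta=3\pi/2$; on the quarter disc they bound, this quantity solves a Dirichlet problem with nonpositive data $\overline{\beta}_2$ on the arc and zero on the straight sides. A subsolution/comparison argument (in the style of Lemma~4.5 of \cite{canevari2017order}, using the scaling $u_a^\tau(\tau r,\theta)=u_a(r,\theta)$ and the concavity hypothesis $\partial_\theta^2\overline{a}\leq 0$) then yields strict radial monotonicity of $u_a$, hence of $p_{12}$ itself, on each ray. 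This gives a unique $r^*(\theta)$ per ray in the two quadrants where $\overline{\beta}_k\leq 0$, producing the two line defects directly and quantitatively. Your proposed ``fully rigorous'' fix---cut $D$ into four sectors and run a maximum-principle/monotonicity argument on each---is exactly this, so your sketch and the paper ultimately converge; but the paper's route gets there without needing any a priori bound on the number or location of nodal critical points.
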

\begin{proof}
Substituting $K=4$ into the Dirichlet boundary condition \eqref{pb}, $\overline{\alpha}_k\equiv 0$, $k = 1,\cdots, 4$.
According to \eqref{zero_solution_p11}, the solution of \eqref{eq:map_zero} on $E^C_4$ is given by
\begin{equation}
p_{11}(re^{i\theta},\gamma^*) = \frac{B}{2C}\cos(2\gamma^*)c(r).
\end{equation}
%where $c(r) = \lim_{\epsilon\to 0}\frac{ln r}{ln 1/\epsilon}$.
As in Proposition \ref{gamma_infinity_restriction}, we can assume that $\gamma\in[0,\pi/4]$.
The zero set of $p_{11}$ is $\emptyset$ for any $\gamma\in[0,\pi/4)$. For $\gamma^* = \pi/4$, we have $p_{11}(re^{i\theta},\pi/4)\equiv 0$.

%Substituting $K = 4$, $\gamma^* = \pi/4$ into \eqref{zero_solution}, we have 
%\begin{equation}
%p_{12}(re^{i\theta},\pi/4) = \frac{1}{2\pi}\sum_{k =1}^4\int_{\pi(4-k)/2}^{\pi (5-k)/2}\overline{\beta}_k\frac{1-r^2}{1-2r\cos(\phi-\theta)+r^2}d\phi + \frac{B}{2C}c(r).
%\end{equation}
%In the following, we proved that $p_{12}(re^{i\theta},\pi/4)-\frac{B}{2C}c(r)$ is monotonically decreasing in $r$ directions.
%Firstly, we prove that on $\theta = \pi$ i.e., $y = 0$ and $x\leq 0$, and $\theta = 3\pi/2$, i.e., $x=0$ and $y\leq 0$, $p_{12}(re^{i\theta},\gamma^*)-\frac{B}{2C}\sin(2\gamma^*)c(r) \equiv 0$.
%Notice that $p_{12}(re^{i\theta},\gamma^*)-\frac{B}{2C}\sin(2\gamma^*)c(r)$ only depends on $r$ and $\theta$.
%As the relation in \eqref{p12_reflection}, we have
%$p_{12}(re^{i \pi},\pi/4)-\frac{B}{2C}\sin(\pi/2)c(r)= -(p_{12}(re^{-i \pi},-\pi/4)-\frac{B}{2C}\sin(-\pi/2)c(r)) = -(p_{12}(re^{i \pi},\pi/4)-\frac{B}{2C}\sin(\pi/2)c(r)) \equiv 0$.
%As the relation in \eqref{p12_rotation} with $n = 1$, $K = 4$, $\theta = \pi$, and $\gamma = -\pi/4$, and relation in \eqref{p12_reflection} we have
%$p_{12}(re^{i 3\pi/2},\pi/4) = \cos(\pi)p_{12}(re^{i \pi},-\pi/4) = -p_{12}(re^{i \pi},-\pi/4) = p_{12}(re^{-i\pi},\pi/4) = p_{12}(re^{i\pi,\pi/4})$,
%and subsequently $p_{12}(re^{i 3\pi/2},\pi/4)-\frac{B}{2C}\sin(\pi/2)c(r)= p_{12}(re^{i \pi},\pi/4)-\frac{B}{2C}\sin(\pi/2)c(r) = 0$.
%-\frac{1}{2\pi}\sum_{k = 1}^K\int_{2\pi(k-1)/K}^{2\pi k/K}\overline{\beta}_k\frac{1-r^2}{1-2r\cos(\phi-\theta)+r^2}d\phi + \lim_{\epsilon\to 0}\frac{B}{2C}\frac{\sin(2\gamma^*)ln r}{ln \epsilon}\label{zero_solution_p12}.
Substituting $K = 4$, into \eqref{zero_solution_p12}, we have \begin{equation}\label{r_theta}
p_{12}(re^{i\theta},\gamma^*) - \frac{B}{2C}\sin(2\gamma^*)c(r)= -\frac{1}{2\pi}\sum_{k =1}^4\int_{\pi(k-1)/2}^{\pi k/2}\overline{\beta}_k\frac{1-r^2}{1-2r\cos(\phi-\theta)+r^2}d\phi.
\end{equation}
which only depends on $r$ and $\theta$.
In the following, we prove that with $\gamma^* = \pi/4$, $p_{12}(re^{i\theta},\pi/4)-\frac{B}{2C}c(r)$ is monotonically decreasing in the $r$-direction. Firstly, we prove that on $\theta = \pi$ i.e., $y = 0$ and $x\leq 0$, $p_{12}(re^{i\pi},\gamma^*)-\frac{B}{2C}\sin(2\gamma^*)c(r) \equiv 0$.
Using \eqref{p12_reflection}, we obtain
\begin{equation}\label{zero1}
p_{12}(re^{i \pi},\gamma^*)-\frac{B}{2C}\sin(2\gamma^*)c(r)= -(p_{12}(re^{-i \pi},-\gamma^*)-\frac{B}{2C}\sin(-2\gamma^*)c(r)).
\end{equation} 
Since $p_{12}(re^{i\theta},\gamma^*) - \frac{B}{2C}\sin(2\gamma^*)c(r)$ in \eqref{r_theta} only depends on $r$ and $\theta$, we have $p_{12}(re^{i\theta},\gamma^*) - \frac{B}{2C}\sin(2\gamma^*)c(r) = p_{12}(re^{i\theta},-\gamma^*) - \frac{B}{2C}\sin(-2\gamma^*)c(r),
$
%\begin{equation} \nonumber
%p_{12}(re^{i\theta},\gamma^*) - \frac{B}{2C}\sin(2\gamma^*)c(r) = p_{12}(re^{i\theta},-\gamma^*) - \frac{B}{2C}\sin(-2\gamma^*)c(r),
%\end{equation}
and subsequently
\begin{equation}\label{zero2}
p_{12}(re^{-i \pi},-\gamma^*)-\frac{B}{2C}\sin(-2\gamma^*)c(r) = p_{12}(re^{-i \pi},\gamma^*)-\frac{B}{2C}\sin(2\gamma^*)c(r) = p_{12}(re^{i \pi},\gamma^*)-\frac{B}{2C}\sin(2\gamma^*)c(r).
\end{equation}
Combining \eqref{zero1} and \eqref{zero2}, we get the desired conclusion,
\begin{equation}\label{zero3}
p_{12}(re^{i \pi},\gamma^*)-\frac{B}{2C}\sin(2\gamma^*)c(r)=0.
\end{equation}

Similarly, we prove that on $\theta = 3\pi/2$, i.e., $x=0$ and $y\leq 0$, $p_{12}(re^{i3\pi/2},\gamma^*)-\frac{B}{2C}\sin(2\gamma^*)c(r) \equiv 0$.
Using \eqref{p12_rotation} with $n = 1$, $K = 4$, $\theta = \pi$, and $\gamma = \pi/4$, we have
\begin{equation} \nonumber
p_{12}(re^{i 3\pi/2},-\pi/4) = \cos(\pi)p_{12}(re^{i \pi},\pi/4) = -p_{12}(re^{i \pi},\pi/4).
\end{equation}
From \eqref{zero3}, $p_{12}(re^{i \pi},\pi/4) = \frac{B}{2C}\sin(\pi/2)c(r)$ and hence, 
\begin{align}
p_{12}(re^{i3\pi/2},\gamma^*)-\frac{B}{2C}\sin(2\gamma^*)c(r)&=p_{12}(re^{i3\pi/2},-\pi/4)-\frac{B}{2C}\sin(-\pi/2)c(r)\nonumber \\
&=  -p_{12}(re^{i \pi},\pi/4) + \frac{B}{2C}\sin(\pi/2)c(r) \equiv 0.
\end{align}
\begin{figure}
\centering
        \includegraphics[width=0.3\columnwidth]{./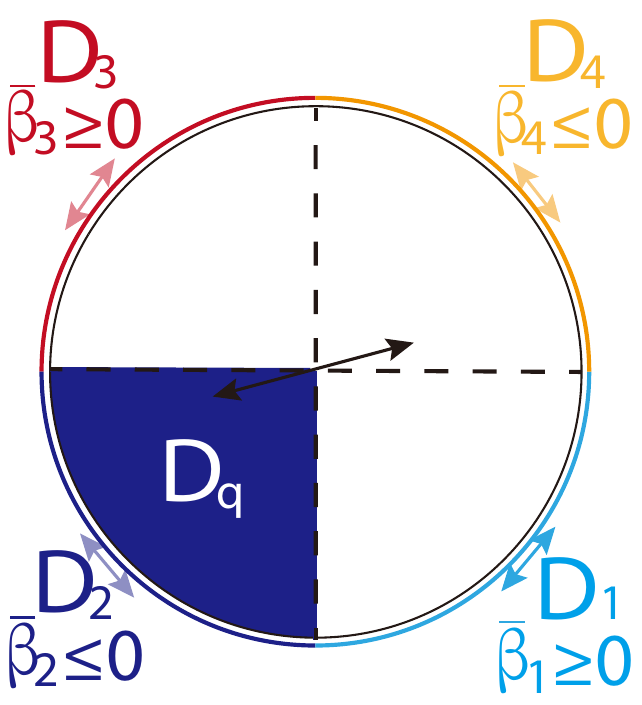}
        \caption{The mapped domain $D$, with $K = 4$. We show the sign of $\overline{\beta}_k$, $k = 1,\cdots, 4$, the boundary condition of $p_{12}$ on $D_k$, the segment of $\partial D$, and $\overline{\beta}_2, \overline{\beta}_4\leq 0$, $\overline{\beta}_1, \overline{\beta}_3\geq 0$. The quarter of disc $D_q = D\cap\{x\leq 0\}\cap\{y\leq 0\}$ is shown in dark blue.}
        \label{fig:disc_4}
\end{figure}

On the quadrant $\theta\in[\pi,3\pi/2]$ or the quarter disc,  $D_q = D\cap\{x\leq 0\}\cap\{y\leq 0\}$ (see Fig. \ref{fig:disc_4}), the function, $p_{12}(r e^{i\theta},\pi/4)-\frac{B}{2C}\sin(\pi/2)c(r)$, is the solution of 
\begin{align}
&\Delta u_a = 0,\ on\ D_q,\nonumber\\
&u_a = \overline{\beta}_2,\ on\ D_2,\nonumber\\
&u_a = 0,\ on\ x = 0\ and\ y = 0.\label{laplace}
\end{align}
where $D_2$ is the segment of $\partial D$ with $x\leq0$ and $y\leq 0$, $\overline{\beta}_2\leq 0$ is the corresponding Dirichlet boundary condition of $p_{12}$ on $D_2$.
Following the arguments from Lemma 4.5 in \cite{canevari2017order}, the first step is to show that
$u_a$ is non-increasing in the $r$ direction, i.e.
\begin{gather}\label{non-decreasing}
u_a(r,\theta)\geq u_a(\tau r, \theta),\ \forall \tau>1,\ (r,\theta)\in D_q.
\end{gather}
We consider Problem \eqref{laplace_tau}, the analogue of Problem \eqref{laplace} on the extended domain, $D_q^{\tau} = (\tau r,\theta)$ where $(r,\theta)\in D_q$,
\begin{align}
&\Delta u_a^{\tau} = 0,\ on\ D_q^{\tau},\nonumber\\
&u_a^{\tau} = \overline{\beta}_2,\ on\ D_2^{\tau},\nonumber\\
&u_a^{\tau} = 0,\ on\ x = 0\ and\ y = 0.\label{laplace_tau}
\end{align}
where $D_2^{\tau} = (\tau r,\theta)$, with $(r,\theta)\in D_2$.
By scaling invariance, the unique non-negative solution to Problem \eqref{laplace_tau} is given by
\begin{gather}\label{translation}
u_a^{\tau}(\tau r,\theta):=u_a(r,\theta)\ for\ any\ (r,\theta)\in D_q.
\end{gather}
Moreover, the function
\begin{align}
u_a^s(r,\theta) = \begin{cases} 
&u_a\ on\ D_q,\\
&\overline{\beta}_2(\theta),\ on\ \{r>1\}\cup\{x\leq 0\}\cup\{y\leq 0\}\\
\end{cases}
\end{align}
is a subsolution of \eqref{laplace_tau}, with $\frac{\partial^2 \overline{a}}{\partial \theta^2}\geq 0$, so that %for any non-negative function $\tilde{u}_a\in C_c^1(D_q)$ we have
%\begin{equation}
%\int_{D_q}\nabla u_a^s\cdot\nabla\tilde{u}_a dA \leq 0.
%\end{equation}
%The sub- and supersolution method combined with the uniqueness of solutions for \eqref{laplace} implied that
\begin{gather}
u_a^{\tau}(\tau r,\theta)\geq u_a^s(\tau r,\theta)\ for\ any\ (r,\theta)\in D_q\ s.t.\ (\tau r,\theta)\in D_q
\end{gather}
Recalling \eqref{translation} and using $u_a^s = u_a$ on $D_q$, we conclude the proof of \eqref{non-decreasing}.
Let $v_a = \partial u_a/\partial r$. By \eqref{non-decreasing}, $v_a\leq 0$ on $D_q$; we want to prove that the strict inequality holds. We differentiate Equation \eqref{laplace} with respect to $r$, so that
$\nabla v_a = 0\ on\ D_q.$
By the strong maximum principle, we deduce that either $v_a\equiv 0$ or $v_a<0$ in $D_q$. The first possibility is clearly inconsistent with the boundary conditions for $u_a$ in (\ref{laplace}), and hence, $v_a$ must be strictly negative inside $D_q$.
%\begin{align}
%u_a = 0\ on\ x=0\ or\ y = 0,\\
%u_a = \overline{\beta}_2(\theta)\ on\ D_2.
%\end{align} 
%Therefore $v_a$ must be strictly negative inside $D_q$. 

Therefore, on the quadrant $\theta\in[\pi,3\pi/2]$, $p_{12}(re^{i\theta},\pi/4)-\frac{B}{2C}\sin(\pi/2)c(r)$ is monotonically decreasing in $r$ directions from $0$ to $\overline{\beta}_2$. 
Using analogous arguments,  $p_{12}(re^{i\theta},\pi/4)-\frac{B}{2C}\sin(\pi/2)c(r)$ is monotonically decreasing in the $r$-direction, from $0$ to $\overline{\beta}_4$ for on $\theta\in[0,\pi/2]$, and  monotonically increasing in $r$, from $0$ to $\overline{\beta}_1$ or $\overline{\beta}_3$, on the quadrants $\theta\in[\pi/2,\pi]\cup [3\pi/2,2\pi]$.

Moreover, $\frac{B}{2C}\sin(\pi/2)c(r)$ is monotonically decreasing from $\frac{B}{2C}$ to $0$, hence there exists $r^*(\theta)$ for each $\theta\in[0,\pi/2]\cup[\pi,3\pi/2]$ s.t. $p_{12}(r^*(\theta)e^{i\theta},\pi/4) = 0$, so that there are two line defects in limiting profile for $E^C_K$ with $\gamma^* = \pi/4$. Using \eqref{p11_rotation} and \eqref{p12_rotation}, $(p_{11},p_{12})(re^{i\theta + i 2\pi n/K},\gamma^*-\frac{2\pi n}{K})$, $n\in\mathbb{Z}$, can be obtained by rotating $(p_{11},p_{12})(re^{i\theta},\gamma^*)$. Hence, there are two line defects for the limiting profile $(P_{11},P_{12})(\rho e^{i\psi},\gamma^*)$ on $E^C_4$, for $\gamma^* = \pi/4 + n\pi/2$, $n\in\mathbb{Z}$.
\end{proof}
We plot the mapped solution of \eqref{eq:map_zero} on $D$ with $K = 4$ and corresponding limiting solution of \eqref{zero_euler} on $E_4^C$ with $\gamma^* = 0$ and $\pi/4$ in Fig. \ref{fig:square} respectively.
On $E^C_4$, with $\gamma^* = \pi/4$, there are two line defects; for $\gamma^* = 0$, there are no interior defects on $E^C_4$ although there are defects at the square vertices where the nematic director cannot be uniquely defined. %, as in Proposition \ref{square}.
\begin{figure}
\centering
        \includegraphics[width=0.6\columnwidth]{./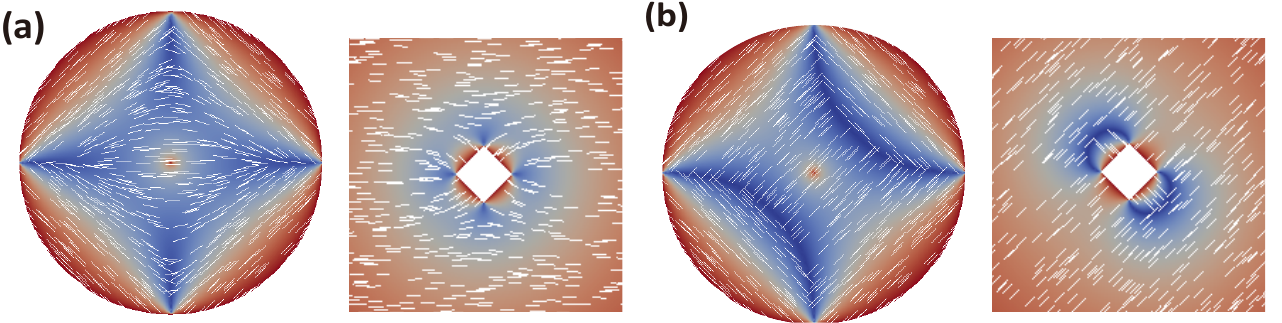}
        \caption{The mapped solution of \eqref{eq:map_zero} on $D$ with $K = 4$ and the corresponding limiting solution for \eqref{zero_euler} on $E_4^C$} with (a) $\gamma^* = 0$ and (b) $\gamma^* = \pi/4$.
        \label{fig:square}
\end{figure}
\subsubsection{The limiting profile for $E^C_K$ with $K>4$}

\begin{lemma}\label{lemma1}
For $\gamma^*\in[0,\pi/K]$, $K\in\mathbb{Z}$ and $K>4$, the solution of (\ref{eq:map_zero}) has no defect on $D_{q1} =  D\cup \{2\pi-\frac{\pi}{K}\leq\theta\leq 2\pi\}$ and $D_{q2} =  D\cup \{\pi-\frac{\pi}{K}\leq\theta\leq\pi \}$. 
%and we can find two defect-free interior curves $curve_1$ and $curve_2$ on which $\gamma$ is strictly monotonical.
\end{lemma}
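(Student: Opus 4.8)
The plan is to reduce, via the symmetry results already in hand, to showing that $p_{12}\ge0$ on the closed sector $\overline{D_{q1}}$ with $p_{12}\not\equiv0$; the strong maximum principle then forces $p_{12}>0$ in the open sector, so $(p_{11},p_{12})$ has no zero there, i.e.\ no defect. By Proposition~\ref{gamma_infinity_restriction} we may assume $\gamma^*\in[0,\pi/K]$. The sector $D_{q2}$ is the image of $D_{q1}$ under rotation by $\pi$ (when $K$ is even, using Corollary~\ref{remark1}) or under rotation by $\pi-\pi/K$ followed by the reflection $z\mapsto\overline z$ (when $K$ is odd, using \eqref{p11_rotation}, \eqref{p12_rotation} with $n=(K-1)/2$ together with \eqref{p12_reflection}); in either case the induced transformation of $(p_{11},p_{12})$ is a rotation, hence fixes the zero vector and keeps $\gamma^*$ inside $[0,\pi/K]$. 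So it suffices to rule out defects in $D_{q1}$.

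Write $p_{12}(re^{i\theta},\gamma^*)=h_{12}(re^{i\theta})+\tfrac{B}{2C}\sin(2\gamma^*)\,c(r)$, where $h_{12}=p_{12}(\cdot,0)$ is the harmonic extension of the Dirichlet datum $p_{12b}$ and $c(r)=\lim_{\epsilon\to0}\ln r/\ln\epsilon\ge0$; since $2\gamma^*\in[0,2\pi/K]\subset[0,\pi/2)$ the second term is $\ge0$. On the circular part of $\partial D_{q1}$ (half of the arc $D_1$), \eqref{pb} gives $p_{12b}=\tfrac{B}{2C}\sin(2\pi/K)(1-\overline a)\ge0$ and $p_{11b}=\tfrac{B}{2C}\cos(2\pi/K)>0$. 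On the radial side $\theta=0$ and at the corner $z=0$, Corollary~\ref{0_symmetry} forces $h_{12}\equiv0$, so there $p_{12}=\tfrac{B}{2C}\sin(2\gamma^*)c(r)\ge0$. Thus the only substantial part of $\partial D_{q1}$ is the edge–midpoint ray $\theta=-\pi/K$, and by the above splitting (its $\sin(2\gamma^*)c(r)$ term being $\ge0$) it is enough to prove $h_{12}(re^{-i\pi/K})\ge0$, equivalently, by Corollary~\ref{0_symmetry}, $h_{12}(re^{i\pi/K})\le0$.

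The last inequality is the analytic heart of the lemma. The ray $\theta=\pi/K$ is the midpoint ray of the arc $D_K$, on which $p_{12b}=\overline\beta_K\le0$ for $K>4$, and along it $h_{12}$ runs from $h_{12}(0)=0$ to $h_{12}(e^{i\pi/K})=-\tfrac{B}{2C}\sin(2\pi/K)<0$; I would establish $h_{12}\le0$ on this ray by a radial–monotonicity/barrier argument modelled on the proof of Proposition~\ref{square} (cf.\ \cite[Lemma~4.5]{canevari2017order}), using the concavity hypothesis $\partial^2_\theta\overline a\le0$ to produce a subsolution by extending the boundary datum $\overline\beta_K$ radially and then invoking the comparison principle. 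When $K$ is even this is clean: Corollary~\ref{remark1} together with Corollary~\ref{0_symmetry} shows $h_{12}$ vanishes on both radial sides $\theta=0$ and $\theta=\pi/2$ of the wedge $\{0<\theta<\pi/2\}$, on whose arc $p_{12b}\le0$ by \eqref{pb}, so the comparison yields that $h_{12}$ is non-increasing in $r$ along every ray of this wedge, hence $h_{12}\le0$ there and in particular on $\theta=\pi/K$. When $K$ is odd the only symmetry line through the origin is $\theta\in\{0,\pi\}$ and $p_{12b}$ changes sign once on $(0,\pi)$ (at the vertex preimage where twice the edge–normal angle crosses $\pi$), so this exact picture is unavailable — this is the step I expect to be the main obstacle — and I would instead run a localized version of the same monotonicity argument on the wedge $\{0<\theta<2\pi/K\}$ over $D_K$, tracking the nodal set of $h_{12}$ to show it cannot reach the ray $\theta=\pi/K$.

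Granting this, $p_{12}\ge0$ on all of $\partial D_{q1}$, so by the maximum principle (applicable since $p_{12}$ is harmonic on $D\setminus\{0\}\supset D_{q1}$, with $p_{12}(0)=\tfrac{B}{2C}\sin(2\gamma^*)\ge0$ at the corner) we get $p_{12}\ge0$ on $D_{q1}$. As $p_{12}$ equals $\tfrac{B}{2C}\sin(2\pi/K)(1-\overline a)$ on the arc, which is strictly positive at the midpoint of $D_1$, $p_{12}\not\equiv0$, so the strong maximum principle gives $p_{12}>0$ in the open sector. Hence $D_{q1}$ carries no defect, and by the reduction above neither does $D_{q2}$.
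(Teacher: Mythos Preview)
Your overall strategy coincides with the paper's: both decompose $p_{12}(re^{i\theta},\gamma^*)=h_{12}(re^{i\theta})+\tfrac{B}{2C}\sin(2\gamma^*)\,c(r)$, observe that the second summand is $\ge 0$ for $\gamma^*\in[0,\pi/K]$ and $K>4$, and then argue that $h_{12}>0$ on the sectors so that $p_{12}>0$ and no defect can occur. The paper is much terser than you: it simply asserts that, ``as in Proposition~\ref{square}'', $h_{12}$ is monotone increasing in $r$ on $D_{q1}$ and $D_{q2}$ from $0$ to $\beta_1>0$, and it handles the two sectors in parallel rather than reducing $D_{q2}$ to $D_{q1}$.

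Two remarks on your write-up. First, your symmetry reduction of $D_{q2}$ to $D_{q1}$ for odd $K$ is slightly off as stated: rotating $D_{q1}$ by $\pi-\pi/K$ and then reflecting does not land in $D_{q2}$; you need to reverse the order (reflect, then rotate), or equivalently take $n$ of the opposite sign, and then the new director angle is $\pi/K-\gamma^*$, which does stay in $[0,\pi/K]$. This is a detour the paper avoids altogether, since the arc of $D_{q2}$ also carries non-negative boundary data and the same argument applies directly. Second, your identification of the odd-$K$ step as ``the main obstacle'' is fair: the paper does not spell out on which wedge the scaling/subsolution argument of Proposition~\ref{square} is run, nor why $h_{12}$ vanishes on the second radial side of $D_{q1}$ (it does \emph{not} vanish on $\theta=-\pi/K$ in general), so your more explicit even-$K$ treatment via the quarter-disc $\{0<\theta<\pi/2\}$ and your flagged gap for odd $K$ are a faithful elaboration of the paper's sketch rather than a departure from it.
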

\begin{proof}
%The first term of solution $p_{12}$ in \eqref{zero_solution_constant} on $\theta = 0$ is
%\begin{equation}
%\frac{1}{2\pi}\sum_{k = 1}^K\int_{2\pi(k-1)/K}^{2\pi k/K}\overline{\beta}_k\frac{1-r^2}{1-2r\cos(\phi)+r^2}d\phi
% = \left(\frac{1}{2\pi}\sum_{k = 1}^K\int_{2\pi(k-1)/K}^{2\pi k/K}\overline{\beta}_k\frac{1-r^2}{1-2r\cos(\phi)+r^2}d\phi
%  + \frac{1}{2\pi}\sum_{k = K}^1\int_{2\pi k/K}^{2\pi (k-1)/K}\overline{\beta}_{K-k+1}\frac{1-r^2}{1-2r\cos(\phi)+r^2}d\phi\right)
% = 0. 
%\end{equation}
%Similarly, on $\theta = \pi/2$,
%\begin{equation}
%\frac{1}{2\pi}\sum_{k = 1}^K\int_{2\pi(k-1)/K}^{2\pi k/K}\overline{\beta}_k\frac{1-r^2}{1-2r\cos(\phi-\pi/2)+r^2}d\phi = 0.
%\end{equation}
As in Proposition \ref{square}, we can prove that on $D_{q1}$ and $D_{q2}$, $p_{12}(re^{i\theta},\gamma^*)-\frac{B}{2C}\sin(2\gamma^*)c(r)$ increases monotonically in the $r$-direction from $0$ to $\beta_1>0$, as $r$ increases from $0$ to $1$.
Hence, $p_{12}(re^{i\theta},\gamma^*)-\frac{B}{2C}\sin(2\gamma^*)c(r)$ is positive for $r>0$ on $D_{q1}$ and $D_{q2}$.
For $K>4$ and $K\in\mathbb{Z}$, $\frac{B}{2C}\sin(2\gamma^*)c(r)$ is non-negative, and is zero if and only if $r=1$, i.e., on $\partial D$. 
Hence, $p_{12}$ is nonzero on $D_{q1}$ or $D_{q2}$, i.e., there is no defect on $D_{q1}$ or $D_{q2}$.

\end{proof}
As $p_{12}$ and $p_{11}$ are analytic on $D$, the angle $\gamma = arctan(p_{12}/p_{11})/2$ is continuous.  
We can separate the disc into four parts by two smooth curves (see Fig. \ref{fig:domain_partition}): on $curve_1$ (from $z=e^{i(K-1)\pi/K}$ to $z= 0$ to $z=1$), $\gamma$ decreases from $\pi/K$ to $\gamma^*$ to $0$ monotonically, and on $curve_2$ (from $z=-1$ to $z= 0$ to $z=e^{-i\pi/K}$), $\gamma$ increases from $0$ to $\gamma^*$, and then to $\pi/K$  monotonically. Since there is no defect on $D_{q1}$ and $D_{q2}$ as proved in Lemma \ref{lemma1}, there is no defect on $curve_1$ and $curve_2$. 
We define $D_u$ to be the part of $D$ above $curve_1$, and $D_b$ to be the part of $D$ below $curve_2$. The boundary $\partial D_u$ ($\partial D_b$) consists of a segment of $\partial D$ and the interior curve $curve_1$ ($curve_2$). In the following, we prove that the director angle $\gamma$ is strictly monotonic on $\partial D$ and subsequently, that $\gamma$ is strictly monotonic on either $\partial D_u$ or $\partial D_b$, and rotates by $\pi$. 
\begin{figure}
\centering
        \includegraphics[width=0.3\columnwidth]{./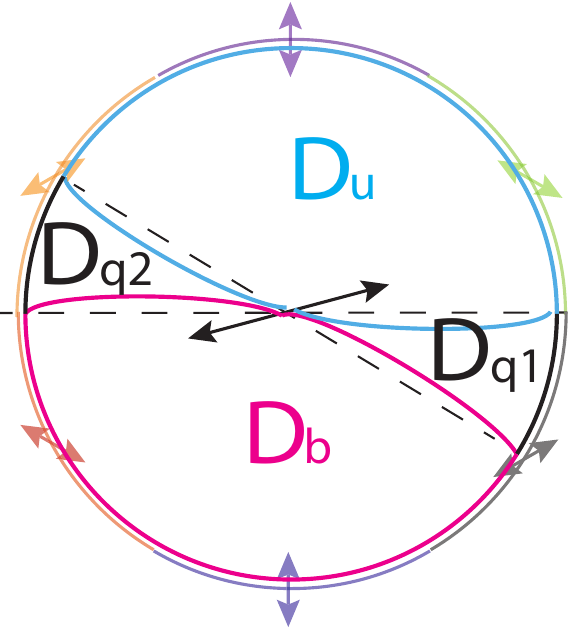}
        \caption{The partition of domain $D$. $D_u$ and $D_b$ are surrounded by blue and pink curves, respectively. $D_{q1} =  D\cap \{2\pi-\frac{\pi}{K}\leq\theta\leq2\pi\}$ and $D_{q2} =  D\cap \{\pi-\frac{\pi}{K}\leq\theta\leq\pi\}$ are domains between the two dashed black lines and solid black curves in $D$. The black two-head arrow in the center of $D$ represents the direction $\mathbf{n}$ of constraint $(p_{11},p_{12})(0,\gamma^*) = (\cos2\gamma^*,\sin2\gamma^*)$. The arcs with different colors in the outer circle represents the segment of boundary $D_k$, $k = 1,\cdots,K$. The two-head arrows on the boundary $D_k$ represent the direction $\mathbf{n}$ of boundary conditions; here, we take hexagon as an example.}
        \label{fig:domain_partition}
\end{figure}
\begin{lemma}\label{lemma2}
On $\partial D$, the angle $\gamma = arctan(p_{12}/p_{11})/2$ is strictly monotonic, and there is no defect.
\end{lemma}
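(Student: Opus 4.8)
The plan is to show that on $\partial D$ the boundary data for $\mathbf{p}$, restricted to the set where defects could occur, traces out a curve that winds monotonically; since $\mathbf{p}$ is $\mathcal{S}_0$-valued this translates into strict monotonicity of $\gamma$. First I would recall that on $\partial D$ the solution equals the prescribed Dirichlet data $(p_{11b},p_{12b})$ from \eqref{pb}, which is a continuous interpolation built from the piecewise-constant homeotropic data $(\hat\alpha_k,\hat\beta_k) = \frac{B}{2C}(\cos\frac{(2k-1)2\pi}{K},\sin\frac{(2k-1)2\pi}{K})$. The key observation is that $(p_{11b},p_{12b})$ has constant modulus $\le B/2C$ (it is a convex combination $(1-\bar a/2)(\text{one unit vector}) + (\bar a/2)(\text{another unit vector})$ scaled by $B/2C$, so it is nonzero on all of $\partial D$ — hence there is no defect on $\partial D$, giving the second assertion immediately). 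For the monotonicity of $\gamma$: on the interior of each segment $D_k$ (away from vertices), $\bar a$ ranges monotonically and the argument of $(p_{11b},p_{12b})$, i.e.\ $2\gamma$, moves monotonically between the constant value $\frac{(2k-1)2\pi}{K}$ at the segment midpoint and the average of neighbouring constant values at the vertex of $D_k$, which is $\frac{2\pi}{K}((2k-1)\pm 1)$. Concatenating across consecutive $D_k$'s, as $\theta$ decreases from $2\pi$ to $0$ the angle $2\gamma$ increases (or decreases, with a fixed orientation) by $\frac{4\pi}{K}$ per segment, for a total of $4\pi$; so $\gamma$ is strictly monotonic on $\partial D$ and increases by $2\pi$ over one circuit.

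The one subtlety is that the boundary data is piecewise-defined with a possible kink at each vertex and at each midpoint, and I must check that $2\gamma$ does not backtrack at those junction points: at a vertex of $D_k$ the value is the average of the two adjacent constant unit vectors, which lies strictly between them on the short arc, so the argument passes monotonically through it; at a midpoint the value is exactly the constant unit vector for that segment, again consistent. Concretely I would parametrise: on $D_k$ write $\mathbf{n}_b = (1-\tfrac{\bar a}{2})\mathbf{m}_k + \tfrac{\bar a}{2}\mathbf{m}_{k\pm1}$ up to the $B/2C$ factor, where $\mathbf{m}_j$ is the unit vector at angle $\frac{(2j-1)\pi}{K}$ (so that $\mathbf{n}_b\otimes\mathbf{n}_b - \mathbf{I}_2/2$ has argument $2\times(\text{angle of }\mathbf{n}_b)$), and compute $\frac{d}{d\theta}\arg(p_{11b}+ip_{12b})$; since $\mathbf{m}_k$ and $\mathbf{m}_{k\pm1}$ differ by angle $\frac{2\pi}{K} < \pi$ for $K>4$ (in fact for all $K\ge 3$), the convex combination sweeps its argument strictly monotonically in $\bar a$, and $\bar a$ is strictly monotonic in $\theta$ on each half-segment by hypothesis. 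Summing the signed increments gives the net rotation.

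The main obstacle I anticipate is bookkeeping the orientation conventions consistently — the direction of increasing $\theta$ versus the counterclockwise labelling of edges, the factor of $2$ between director angle and $\mathbf{Q}$-tensor angle, and the $\pm1$ branch in \eqref{pb} depending on which half of $D_k$ one is on — so that ``strictly monotonic'' has a single unambiguous sign throughout the circuit. Once the sign is pinned down (I would fix it by evaluating at, say, two segment midpoints $D_k$ and $D_{k+1}$, where the arguments are exactly $\frac{(2k-1)2\pi}{K}$ and $\frac{(2k+1)2\pi}{K}$, and noting these decrease as $\theta$ decreases through the labelled order), the rest is the monotonicity-of-a-convex-combination lemma applied segment by segment, plus continuity across junctions. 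I would close by remarking that this $2\pi$-winding of $\gamma$ on $\partial D$ (equivalently $4\pi$ winding of $2\gamma$, i.e.\ degree $2$ of the $\mathbf{S}^1$-valued map $(p_{11b},p_{12b})/|(p_{11b},p_{12b})|$) is exactly what feeds into the next step: it forces $\gamma$ to rotate by $\pi$ along $\partial D_u$ or $\partial D_b$ and hence, combined with the degree-counting/Ginzburg--Landau argument, pins down the number and location of interior defects.
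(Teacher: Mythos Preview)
Your approach is the paper's: compute the boundary data explicitly on one half-segment, check that the modulus is nonzero and the argument is strictly monotone in the interpolation parameter $\bar a$, then propagate to all of $\partial D$ by the rotation and reflection relations \eqref{alpha_rotation}--\eqref{beta_reflection}. That strategy is sound and matches the paper's proof, which writes $(p_{11b},p_{12b})=\frac{B}{2C}\bigl(\cos\frac{2\pi}{K},(1-\bar a)\sin\frac{2\pi}{K}\bigr)$ on the relevant half-segment and reads off both monotonicity of $\gamma$ and positivity of $|(p_{11b},p_{12b})|^2$ directly.

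There is, however, a factor-of-two slip in your execution that makes one of your claims false. The convex interpolation in \eqref{pb} is at the level of $(p_{11b},p_{12b})$, \emph{not} at the level of the director $\mathbf{n}_b$: the two unit vectors being combined are $\bigl(\cos\frac{(2k-1)2\pi}{K},\sin\frac{(2k-1)2\pi}{K}\bigr)$ and $\bigl(\cos\frac{(2k\pm1)2\pi}{K},\sin\frac{(2k\pm1)2\pi}{K}\bigr)$, which differ in argument by $\frac{4\pi}{K}$, not $\frac{2\pi}{K}$. For $K>4$ this is still strictly less than $\pi$, so your non-antipodality and monotone-argument observations go through unchanged. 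But for $K=4$ the angle is exactly $\pi$, the vectors are antipodal, and the convex combination vanishes at $\bar a=1$ (the vertices); hence your parenthetical ``in fact for all $K\ge 3$'' is incorrect, and this is precisely why $K=4$ is handled separately (see the remark following Theorem~\ref{general}). The paper's no-defect check makes this explicit: $p_{11b}^2+p_{12b}^2=\frac{B^2}{4C^2}\bigl((\bar a^2-2\bar a)\sin^2\tfrac{2\pi}{K}+1\bigr)$, whose minimum over $\bar a\in[0,1]$ is $\cos^2\tfrac{2\pi}{K}$, positive exactly when $K>4$. Once you correct the angle (and drop the phrase ``constant modulus'', since the modulus is not constant), your argument and the paper's coincide.
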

\begin{proof}
On $D_{1}\cap\theta\in[2\pi-\pi/K,2\pi]$, the boundary condition is
\begin{align}
&(p_{11b},p_{12b}) = (\overline{\alpha}_1,\overline{\beta}_1)\nonumber\\
& = \frac{B}{2C}\left((1-\frac{\overline{a}}{2})\cos\left(\frac{2\pi}{K}\right) + \frac{\overline{a}}{2}\cos\left(\frac{-2\pi}{K}\right)\right.,\left.(1-\frac{\overline{a}}{2})\sin\left(\frac{2\pi}{K}\right) + \frac{\overline{a}}{2}\sin\left(\frac{-2\pi}{K}\right)\right)\nonumber\\
&= \frac{B}{2C}\left(\cos\left(\frac{2\pi}{K}\right),(1-\overline{a})\sin\left(\frac{2\pi}{K}\right)\right).\nonumber
\end{align} 
%Hence, on vertex $w_1$ ($\overline{a} = 1$), $2\gamma = 0$, in middle of $C_K$ ($a = 0$), $2\gamma = \frac{-2\pi}{K}$, in the neiborhood of $w_1$ on $C_K$, 
The angle, $\gamma = arctan((1-\overline{a})tan(\frac{2\pi}{K}))/2$ monotonically increases from $0$ to $\frac{\pi}{K}$ as $\overline{a}$ decreases from $1$ to $0$, i.e., $\theta$ decreases from $2\pi$ to $2\pi-\frac{\pi}{K}$. The angle $\gamma$ on $\partial D$ rotates by $-2\pi k/K$, as $\theta$ rotates by $2\pi k/K$ and and reflects about axis $\theta = n\pi/K$, $n\in\mathbb{Z}$. Hence, $\gamma$ on $D_b\cap\partial D$ increases monotonically from $\pi/K$ to $\pi$, in a clockwise fashion. Since $\gamma$ on $curve_2$ increases monotonically from $0$ to $\pi/K$, then $\gamma$ on $\partial D_b$ rotates by $\pi$ (analogous remarks apply to $D_u$).
%Since the boundary condition only depends on $\cos 2 \gamma$ and $\sin 2\gamma$, we can also interpret $\gamma$ to be increasing from $\pi$, to $\pi +\gamma^*$ and then to $\pi + \pi/K$ on $curve_2$ (analogous remarks apply to $curve_1$).

Consider $p_{11b}^2 + p_{12b}^2$ on $D_1$, $\theta\in[2\pi-\pi/K,2\pi]$, for $K>4$ i.e.
\begin{align}
&p_{11b}^2 + p_{12b}^2 = \frac{B^2}{4C^2}\left((1-\frac{\overline{a}}{2})\cos\left(\frac{2\pi}{K}\right) + \frac{\overline{a}}{2}\cos\left(\frac{-2\pi}{K}\right) \right)^2 + \frac{B^2}{4C^2}\left((1-\frac{\overline{a}}{2})\sin\left(\frac{2\pi}{K}\right) + \frac{\overline{a}}{2}\sin\left(\frac{-2\pi}{K}\right)\right)^2\nonumber \\
& = \frac{B^2}{4C^2}\left((\overline{a}^2-2\overline{a})\sin^2\frac{2\pi}{K} + 1 \right)>0.\nonumber
\end{align} 
%\begin{align}
%&\left(\frac{a}{2}\cos\left(\frac{(2k-1)2\pi}{K}\right) + \frac{a}{2}\cos\left(\frac{(2k+1)2\pi}{K}\right) + (1-a)\cos\frac{(2k-1)2\pi}{K}\right)^2\\
%&+ \left(\frac{a}{2}\sin\left(\frac{(2k-1)2\pi}{K}\right) + \frac{a}{2}\sin\left(\frac{(2k+1)2\pi}{K}\right) + (1-a)\sin\frac{(2k-1)2\pi}{K}\right)^2 \\
%& = (a^2/2-a)(\cos\frac{2\pi}{K}-1) + 1 >0.
%\end{align} 
%Similarly, we can prove the square of order parameter on segment of $C_k$ from $w_{k}$ to the middle of $C_k$ for polygon with $K>4$ edges is nonzero.
From the rotation-based relations in \eqref{alpha_rotation} and \eqref{beta_rotation},  $p_{11b}^2 + p_{12b}^2$ is nonzero on $D_k$, $\theta\in[2\pi(K-k+1)/K - \pi/K,2\pi(K-k+1)/K]$, $k = 1,\cdots,K$. From the reflection relations in \eqref{alpha_reflection} and \eqref{beta_reflection}, $p_{11b}^2 + p_{12b}^2$ is nonzero on $D_k$, $k = 1,\cdots,K$. 
Hence, there are no defects (or zeros of $\mathbf{p}$) on $\partial D$, for any regular polygon with $K>4$ edges.
\end{proof}
%$2\gamma\in[0,2\pi/K]$ on seperation wall change from $2\pi$ to $\gamma^*\in[0,2\pi/K]$ monotonically on $\theta = \pi$, from $\gamma^*\in[0,2\pi/K]$ to $2\pi/K$ monotonically on $\theta = 0$, if $K$ is odd.
%It changes from $22\pi/K$ to $\gamma^*\in[0,2\pi/K]$ monotonically on $\theta = \pi$, from $\gamma^*\in[0,2\pi/K]$ to $2\pi/K$ monotonically on $\theta = 0$ if $K$ is even. Hence on $\partial{ D_{\pm}}$, $2\gamma$ rotates by $2\pi$.

%\begin{proposition}
%$2\gamma\in[0,2\pi/K]$ on $\theta = \gamma^*$ or $\theta = \gamma^*+\pi$.
%\end{proposition}
%\begin{proof}
%In sector $\theta\in[0,\pi/K]\cap[\pi,\pi+\pi/K]$,
%\begin{align}
%p_{12}-c_2 &= \sum_{k = 1}^K\frac{1}{\pi}\beta_kS_k(\rho e^{2\theta})\\
%&= \frac{B}{2\pi C}\sum_{k = 1}^K arctan(xtan(\pi k/K-\theta/2))(sin\frac{2\pi(2k+1)}{K}-sin\frac{2\pi(2k-1)}{K})+\beta_{k_*}\\
%& = \frac{B}{\pi C}sin\frac{2\pi}{K}\sum_{k = 1}^{K}arctan(xtan(\pi k/K-\theta/2))\cos\frac{4\pi}{K} + \beta_{k_*}\geq 0??
%\end{align}
%Similarly,
%\begin{align}
%(p_{11}-c_1)tan\frac{2\pi}{K}-(p_{12}-c_2) &= \sum_{k = 1}^K\frac{1}{\pi}(\alpha_k tan\frac{2\pi}{K}\beta_k)S_k(\rho e^{2\theta})\\
%&= -\sum_{k = 1}^K\frac{B}{2\pi C\cos(2\pi/K)}sin\frac{4\pi(k-1)}{K}S_k(\rho e^{2\theta})\geq 0??
%\end{align}
%Since $p_{11}-c_1$ in sector $\theta\in[0,\pi/K]\cap[\pi,\pi+\pi/K]$ is positive, we have $0\leq \frac{p_{11}-c_1}{p_{12}-c_2}\leq tan\frac{2\pi}{K}$, i.e., $2\gamma\in[0,2\pi/K]$ on $\theta = 0$ or $\theta = \pi$.
%\end{proof}

\begin{theorem}\label{general}
If $\mathbf{p}$ is a minimizer of $J = \int_{D_u}|\nabla\mathbf{p}|^2 \textrm{dA}$ in $\mathcal{A} = \{\mathbf{p}\in W^{1,2}(D_u;\mathbb{R}^2):\mathbf{p} = \mathbf{p}_{bu}\ on\ \partial D_u\}$, then there is a unique point defect in $D_u$. Similarly, there is a unique point defect in $D_b$. The solution of \eqref{eq:map_zero} has two zeroes or point defects and hence, under the SC mapping $f$ \eqref{eq:SC}, the limiting profile has two point defects in exterior of regular polygon with $K>4$ edges.
\end{theorem}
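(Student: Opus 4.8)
The plan is to use that a minimizer of the pure Dirichlet energy $J$ is simply the componentwise harmonic extension of its boundary data, and then to invoke the zero-counting result from Ginzburg--Landau theory in \cite{baumanowensphillips}, whose input is the winding of the boundary values. First I would record a localization principle: each component of the solution $\mathbf{p}$ of \eqref{eq:map_zero} on $D$ is harmonic, so $\mathbf{p}|_{D_u}$ is harmonic on $D_u$ with boundary trace $\mathbf{p}_{bu} = \mathbf{p}|_{\partial D_u}$, and since the Dirichlet energy is strictly convex, $\mathbf{p}|_{D_u}$ is precisely the unique minimizer of $J$ over $\mathcal{A}$ (and likewise with $D_u$ replaced by $D_b$). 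Hence it suffices to prove the zero count for this minimizer; the conclusion then transfers to the solution of \eqref{eq:map_zero}, and, via the conformal equivalence already established between \eqref{eq:map_zero} and \eqref{zero_euler}, to the solution on $E_K^C$.

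Second, I would check that $\mathbf{p}_{bu}$ on $\partial D_u = curve_1 \cup (\partial D \cap \partial D_u)$ satisfies the hypotheses of the cited result: (i) it is continuous and nowhere zero --- on the $\partial D$-arc by Lemma~\ref{lemma2}, and on $curve_1$ because, as established just before the theorem, the director angle $\gamma = \arctan(p_{12}/p_{11})/2$ is finite and continuous along $curve_1$; (ii) writing $\mathbf{p}_{bu} = |\mathbf{p}_{bu}|(\cos 2\gamma, \sin 2\gamma)$, the argument $2\gamma$ varies strictly monotonically and changes by $2\pi$ over one circuit of $\partial D_u$, since $\gamma$ is strictly monotone on $\partial D \cap \partial D_u$ (Lemma~\ref{lemma2}) and on $curve_1$, and these two monotone pieces match so that $\gamma$ sweeps an interval of length $\pi$. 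Thus $\mathbf{p}_{bu} \colon \partial D_u \to \mathbb{R}^2 \setminus \{0\}$ has degree $\pm 1$, and fixing the orientation we may take it to be $+1$.

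Third, I would apply the result of \cite{baumanowensphillips}: a minimizer of $\int |\nabla \mathbf{p}|^2$ over $W^{1,2}$ maps into $\mathbb{R}^2$ on a simply connected planar domain, with continuous, nonvanishing boundary data of degree $+1$ whose argument is monotone along the boundary, has exactly one zero, and that zero is non-degenerate (of local index $+1$). This yields a unique point defect in $D_u$; the identical argument (or transport by the reflection and rotation relations of Proposition~\ref{gamma_infinity_restriction}) gives a unique point defect in $D_b$. Since $D$ is partitioned by $curve_1$ and $curve_2$ into $D_u$, $D_b$, $D_{q1}$, $D_{q2}$, and there are no defects on $D_{q1}$, $D_{q2}$ (Lemma~\ref{lemma1}) nor on $curve_1$, $curve_2$, all defects of $\mathbf{p}$ lie in $D_u \cup D_b$, so $\mathbf{p}$ has exactly two point defects on $D$. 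As $f$ in \eqref{eq:SC} is a conformal bijection $D \to E_K^C$ under which zeros correspond bijectively, the solution of \eqref{zero_euler} has exactly two point defects outside the regular $K$-gon for every $K > 4$.

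The main difficulty is the third step, i.e.\ fitting our configuration to the hypotheses in \cite{baumanowensphillips}: the domain $D_u$ is only piecewise smooth (corners where $curve_1$ meets $\partial D$, and at the pre-vertex $z=1$), and the boundary data, though continuous, of degree one, and with monotone argument, is neither $C^1$ nor a homeomorphism onto a convex curve, so the classical Rad\'o--Kneser--Choquet theorem is not directly applicable and one genuinely relies on the more flexible counting statement of \cite{baumanowensphillips} (or a Lipschitz-domain version of it); the monotonicity of the argument is exactly the hypothesis that excludes orientation-reversing zeros and so must be tracked with care. A secondary technical point is that in the strictly homeotropic limit $\sigma \to 0$ the boundary data jumps at the polygon vertices and $\mathbf{p}$ loses continuity there, which is handled in the usual way --- either keeping $\sigma > 0$ and passing to the limit, or noting that $|\mathbf{p}_b|$ is bounded below near the vertices so that defects remain uniformly away from them.
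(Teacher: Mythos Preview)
Your proposal is correct and follows essentially the same approach as the paper: verify that the boundary data $\mathbf{p}_{bu}$ on $\partial D_u$ is nonvanishing with strictly monotone argument of total change $2\pi$ (via Lemmas~\ref{lemma1} and~\ref{lemma2}), and then apply the zero-counting result of \cite{baumanowensphillips}. The only difference in presentation is that the paper reproduces the nodal-set machinery of that reference in some detail (defining $w_\alpha = -p_{11}\sin\alpha + p_{12}\cos\alpha$, the nodal curves $N_\alpha$, and the parametrization $X(\tau,\alpha)$ from Theorem~2.3 there), whereas you invoke the result as a black box; your added discussion of the domain and boundary-data regularity needed to fit the hypotheses of \cite{baumanowensphillips} is a point the paper glosses over by simply asserting $C^{2,\mu}$ data.
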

\begin{proof}
Let $Y(l)$ for $0\leq l \leq L$ be a one-to-one parametrization of $\partial\D_u$ with respect to arclength. Considering the Dirichlet data, $(p_{11bu},p_{12bu})$, in $C^{2,\mu}(\partial \D_u;\mathbb{R}^2)$, we have
\begin{equation}
(p_{11bu},p_{12bu})(Y(l)) = (s(l) cos2\gamma(l), s(l) sin2\gamma(l)).
\end{equation}
As proved in Lemma \ref{lemma1} and \ref{lemma2}, we have
\begin{gather}
s(l) > 0,\ \gamma'(l)\neq 0\ for\ l\in[0,L],\  and\ |2\gamma(L)- 2\gamma(0)| = 2\pi.
\end{gather}
For $\alpha\in\mathbb{R}$ and $\mathbf{p}$, a minimizer for $J$ in $\mathcal{A}$, set
\begin{equation}
w_{\alpha}(X) =  -p_{11}(X)\sin(\alpha) + p_{12}(X)\cos(\alpha).
\end{equation}
Define the nodal set of $w_{\alpha}$ by
\begin{equation}
N_{\alpha}\equiv\{X\in\overline{D}_u:w_{\alpha}(X) = 0\}.
\end{equation}
Note that for any pair $\alpha_1$, $\alpha_2$ with $0 \leq \alpha_1 < \alpha_2 < \pi$, the set of zeros of $\mathbf{p}$ is just
$\Gamma(\mathbf{p}) = N_{\alpha_1}\cap N_{\alpha_2}$. Thus as $\alpha$ varies, $\Gamma(\mathbf{p})$ is the subset of $N_{\alpha}$ that remains fixed.
Analogous to Lemma 2.2 in \cite{baumanowensphillips}, we can prove for each $\alpha\in[0,2\pi]$, $N_{\alpha}$ is a $C^1$ imbedded curve in $\overline{D}_u$, which enters and exits $\overline{D}_u$, at distinct points of $\partial D_u$.
The following proof is identical to Theorem 2.3 in \cite{baumanowensphillips}. Let $\{P_1, P_2\} = N_{0}\cap \partial\D_u$ and assume without loss of generality that $\sin(2\gamma)\vert_{P_1} = 0$ with $\cos(2\gamma)\vert_{P_1}<0$ and $\sin(2\gamma)\vert_{P_2} = 0$ with $\cos(2\gamma)\vert_{P_2}>0$. The points $P_1$ and $P_2$ partition $\partial D_u$ into two arcs, $(\partial D_u)^-\equiv\{X\in\partial D:\sin(2\gamma)\vert_X\leq 0\}$ and $(\partial D_u)^+\equiv\{X\in\partial D:\sin(2\gamma)\vert_X\geq 0\}$. Starting at $P_1$ and moving along $N_0$, let $X_0$ be the first point reached in $\Gamma(\mathbf{p})$. Since $X_0\in N_{\alpha}$ for all $\alpha$, each $N_{\alpha}$ can be parametrized by arclength, $X = X(\tau, \alpha)$, such that
\begin{gather}
a(\alpha) < \tau < b(\alpha),\ a(\alpha) < 0, X(a(\alpha),\alpha)\in(\partial\D_u)^-,\ b(\alpha)>0,\ X(b(\alpha),\alpha)\in(\partial\D_u)^+,\nonumber\\
\left\vert\frac{\partial X}{\partial \tau}\right\vert = 1,\ and\ X(0,\alpha) = X_0.\nonumber
\end{gather}
Set $\mathcal{D} = \{(\tau,\alpha):a(\alpha)\leq\tau\leq b(\alpha),0\leq\alpha\leq\pi\}$. In part 1 of Theorem 2.3, the authors proved $X\in C^1(\mathcal{D})$, and $a(\alpha)$ and $b(\alpha)$ are in $C^1([0,\pi])$. In part 2, they have that for all $\alpha$ in $[0,\pi]$ and $\tau<0$, $X(\tau,\alpha)\notin\Gamma(\mathbf{p})$. In particular, $X(\tau,\pi)\notin\Gamma(\mathbf{p})$. In part 3, now $X(\tau, 0)$ and $X(\tau, \pi)$ are each parametrizations of $N_0$ in opposite directions. Since $X_0$ is the first point in each direction along $N_0$ that is in $\Gamma(\mathbf{p})$, we conclude that $\Gamma(\mathbf{p}) = {X_0}$, i.e., there is a unique point defect in $D_u$. Similarly, there is a unique point defect in $D_b$. Under the SC mapping $f$ \eqref{eq:SC}, there are two point defects in exterior of regular polygon with $K>4$ edges.
\end{proof}
\begin{rem}\label{remark2}
Theorem \ref{general} does not work for $E^C_K$ with $K = 3$ or $4$.
For $K = 3$, since there exists a $r^*$ s.t. $(p_{11},p_{12})(r^*e^{i2\pi/3}, \pi/3) = (p_{11},p_{12})(r^*, 0) = 0$, we may have a defect on $D_{q1}$ or $D_{q2}$ which does not satisfy Lemma \ref{lemma1}. 
For $K = 4$, with the given boundary conditions, we have defects at the four vertices, $(p_{11},p_{12}) = \frac{B}{2C}\left(cos(\frac{2\pi}{K}),(1-\overline{a})\sin(\frac{2\pi}{K})\right) = (0,1-\overline{a}(0)) = (0,0)$, which does not satisfy Lemma \ref{lemma1} and \ref{lemma2}.
\end{rem}
\begin{figure}
\centering
        \includegraphics[width=0.8\columnwidth]{./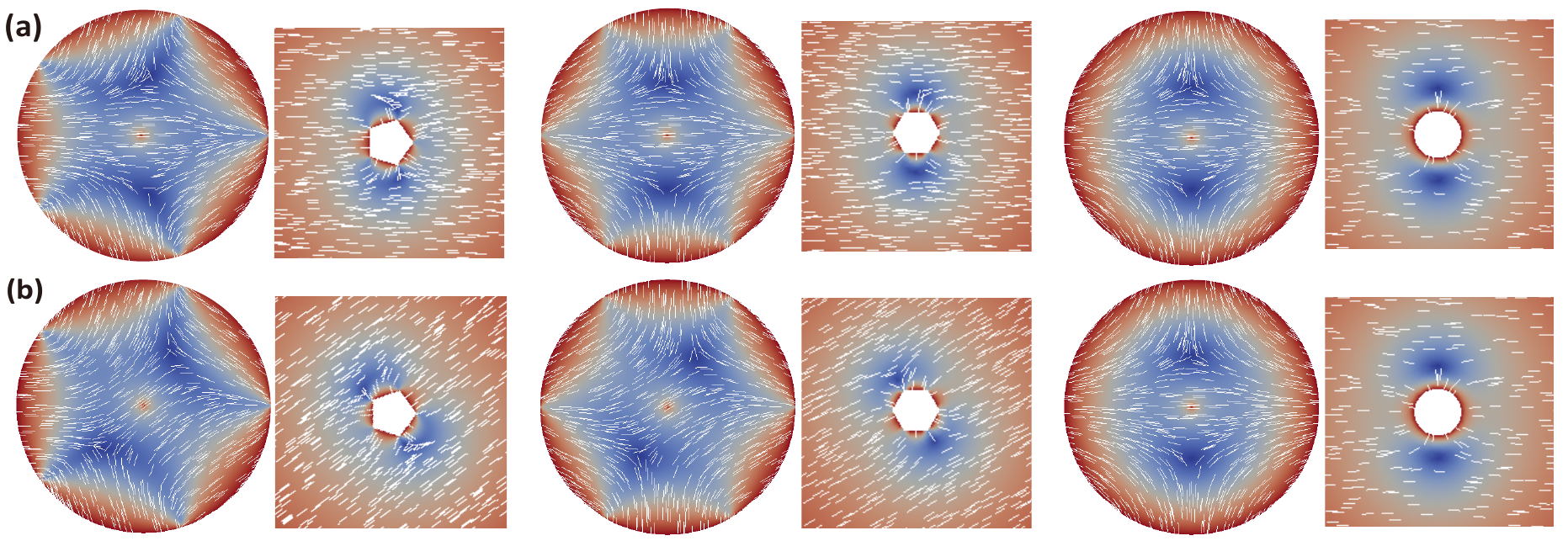}
        \caption{The mapped solution of \eqref{eq:map_zero} on $D$ with $K = 5,6,\infty$ and the corresponding limiting solution of \eqref{zero_euler} on $E_K^C$, the complement or exterior domain of a pentagon, hexagon and disc with (a) $\gamma^* = 0$ and (b) $\gamma^* = \pi/K$.} %\textcolor{red}{I split the big picture into Fig. 3.3(triangle), Fig. 3.6(square) and Fig. 3.8(general).}}
       \label{fig:general}
\end{figure}
%\begin{figure}
%\centering
%        \includegraphics[width=\columnwidth]{./image/mapped_solution.pdf}
%        \caption{The mapped solution of \eqref{eq:map_zero} on $D$ with $K = 6,5,4,3$ and (a) $\gamma^*=0$ and (b) $\gamma^* = \pi/K$. In this figure and all subsequent figures except Fig. \ref{fig:vertices} and Fig. \ref{fig:dipole}, we have $B = 0.64 \times 10^4 N/m^2$ and $C = 0.35 \times 10^4 N/m^2$.}
%        \label{fig:mapped_solution}
%\end{figure}
%\begin{figure}
%\centering
%        \includegraphics[width=0.8\columnwidth]{./image/quadrupolar_solution.pdf}
%        \caption{The solution of \eqref{zero_euler} on exterior domain of hexagon, pentagon, square, triangle with $\gamma^* = 0$ or $\gamma^*=\pi/K$.}
%        \label{fig:quadrupolar_solution}
%\end{figure}
The mapped solution of \eqref{eq:map_zero} with $K = 3,4,5,6,\infty$, and the corresponding solutions of \eqref{zero_euler} in the exterior of a triangle, square, pentagon, hexagon and disc, with $\gamma^* = 0$ and $\gamma^*=\pi/K$, are plotted in Fig. \ref{fig:triangle}, Fig. \ref{fig:square} and Fig. \ref{fig:general} respectively. With even $K$, the solutions have the symmetry property in Corollary \ref{remark1}.
With $\gamma^* = 0$ and $\pi/K$, the solutions have the symmetry properties in Corollary \ref{0_symmetry} and Corollary \ref{piK_symmetry}, respectively. The mapped solution has two $-1/2$ defects on two sides of $\theta = \gamma^*+\pi/2$ as in Theorem~\ref{general}, for $K>4$.

\section{The $\lambda\to\infty$ limit}\label{sec:infinity}
In this section, we use heuristics and numerical methods to study stable rLdG equilibria in the $\lambda \to \infty$ limit. This limit is relevant for polygonal holes whose edge length, $\bar{\lambda}$, is much greater than the nematic correlation length i.e. micron-scale holes. Consider the rLdG energy in \eqref{p_energy}; as $\lambda \to \infty$, minimisers of the rLdG energy will belong to the set of minimisers of the bulk energy, away from defects. Recall that the bulk energy density is given by
$$ f_b(\Pvec) = -\frac{B^2}{4C} \textrm{tr}\Pvec^2 + \frac{C}{4}\left(\textrm{tr}\Pvec^2 \right)^2.$$

%i.e. belong to the set $\mathcal{N}$ defined in Section~\ref{sec:theory} almost everywhere. 
More precisely, in Theorem 2 of \cite{Bronsard2016minimizers}, the authors prove that global minimisers of the rLdG energy in \eqref{p_energy}, converge strongly in $W^{1,2}$, to
\begin{equation}\label{P_infty}
        \mathbf{P}^{\infty} = \frac{B}{C}\left(\nvec^{\infty}\otimes \nvec^{\infty}-\frac{1}{2}\mathbf{I}_2\right),\nonumber
\end{equation}
where
\begin{equation}
\nvec^{\infty} = \left(\cos\gamma^*,\sin\gamma^*\right),
\end{equation}
and $\gamma^*$ is the solution of the following equations
\begin{equation}
        \begin{aligned}
            &\Delta\gamma = 0,\ on\ E_K^C\\
            &\gamma = \gamma_b,\ on\ \partial E_K,\\
            &\gamma = \gamma^*,\ |x|\to\infty,
        \end{aligned}
        \label{infty_euler}
\end{equation} everywhere away from the vertices of $E_K$. The convergence is actually stronger but that is not the focus of this paper.
%The choices of $\gamma_b$ will be stated later in Section \ref{sec:infinity}.

\begin{figure}
    \begin{center}
        \includegraphics[width=0.3\columnwidth]{./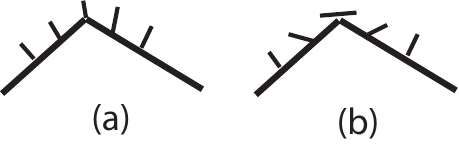}
        \caption{vertex classification according to director profile: (a) splay and (b) bend}
        \label{fig:vertices}
    \end{center}
\end{figure}

We use this straightforward characterisation to compute an (under)estimate of the number of stable rLdG equilibria on $E_C^K$, in the $\lambda \to \infty$ limit. Our discussion does not cover local energy minimisers but does serve to illustrate the rich prospects of multistability tailored by two model parameters - $K$ and $\gamma^*$. Multistability arises from the fact that for a given $\gamma^*$, there are multiple choices of $\gamma_b$ consistent with the homeotropic boundary conditions in \eqref{Pb_constant} as $\sigma\to 0$, and each choice of $\gamma_b$ yields a limiting rLdG minimiser profile in the $\lambda \to \infty$ limit. We estimate the number of limiting profiles based on two assumptions - (i)  we restrict $\gamma_b$ so that $\gamma$ rotates by either $2\pi/K$ or $2\pi/K-\pi$ at a vertex (see Fig. \ref{fig:vertices}; referred to as ``splay" and ``bend" vertices respectively). This rotation is required to match the homeotropic boundary conditions at the two intersecting edges. (ii) We assume that $\textrm{deg}\left(\mathbf{n}_b,\partial E_K^C\right) = 0$, where $\mathbf{n}_b = (\cos\gamma_b,\sin\gamma_b)$, so that if $n_b$ denotes the number of ``bend'' vertices, then we necessarily have %there are necessarily $K-n_b$ ``splay" vertices and $n_b$ ``bend" vertices such that
\begin{equation}
   \textrm{deg}\left(\mathbf{n}_b,\partial E_K^C\right) = (K-n_b)\left(2\pi/K\right)+n_b\left(2\pi/K-\pi\right) = 0,\nonumber
\end{equation}
 equivalent to $n_b = 2$. The two assumptions ensure that $\mathbf{n}_b$ has the minimal rotation around $\partial E^C_K$ consistent with the homeotropic boundary conditions. We can of course, have scenarios wherein $\gamma_b$ rotates by a greater amount or when the degree of $\mathbf{n}_b$ is non-zero, where $\mathbf{n}_b$ is as above, but it is reasonable to conjecture that rLdG energy minimisers have the simplest admissible topology. Consider $E_K^C$; based on the assumptions above, we have $2$ bend vertices on $E_K^C$ for each $K \geq 3$ and hence, at least $\binom{K}{2}$ limiting profiles and $[K/2]$ classes of solutions which are not related by rotation or reflection in this limit. This is identical to the combinatorial estimate for the number of stable equilibria for the interior problem on $E_K$ in \cite{han2020pol}, in the $\lambda \to \infty$ limit. However, this estimate does not account for the constraint at infinity \eqref{constraint_infty}, and multistability is enhanced for the exterior problem compared to the interior problem, by virtue of $\gamma^*$.

%We set
%\begin{equation}
%\gamma_b = \gamma_k, on\ C_k,\ k = 1,\cdots,K,
%\end{equation}
%where
%\begin{equation}
  %  \gamma_1 = \frac{\pi}{K},\ \gamma_{k+1} = %\gamma_k+jump_k,\ k = 1,\cdots,K-1, \nonumber
%\end{equation}
%and the different stable states are distinguished by the different locations of the two bend vertices, where $\gamma$ rotates as in Fig \ref{fig:vertices}(b). If the chosen corner is between $C_k$ and $C_{k+1}$, then
%$jump_k = 2\pi/K-\pi$, otherwise $jump_k = 2\pi/K$, $k = %1,...,K-1$.
%We may have multiple choices of $\gamma^*\in[n\pi,(n+1)\pi]$, $n\in\mathbb{Z}$ such that $\min\sum_{k = 1}^{K}|\gamma^*-\gamma_k|$.

\begin{figure} 
\centering
        \includegraphics[width=0.8\columnwidth]{./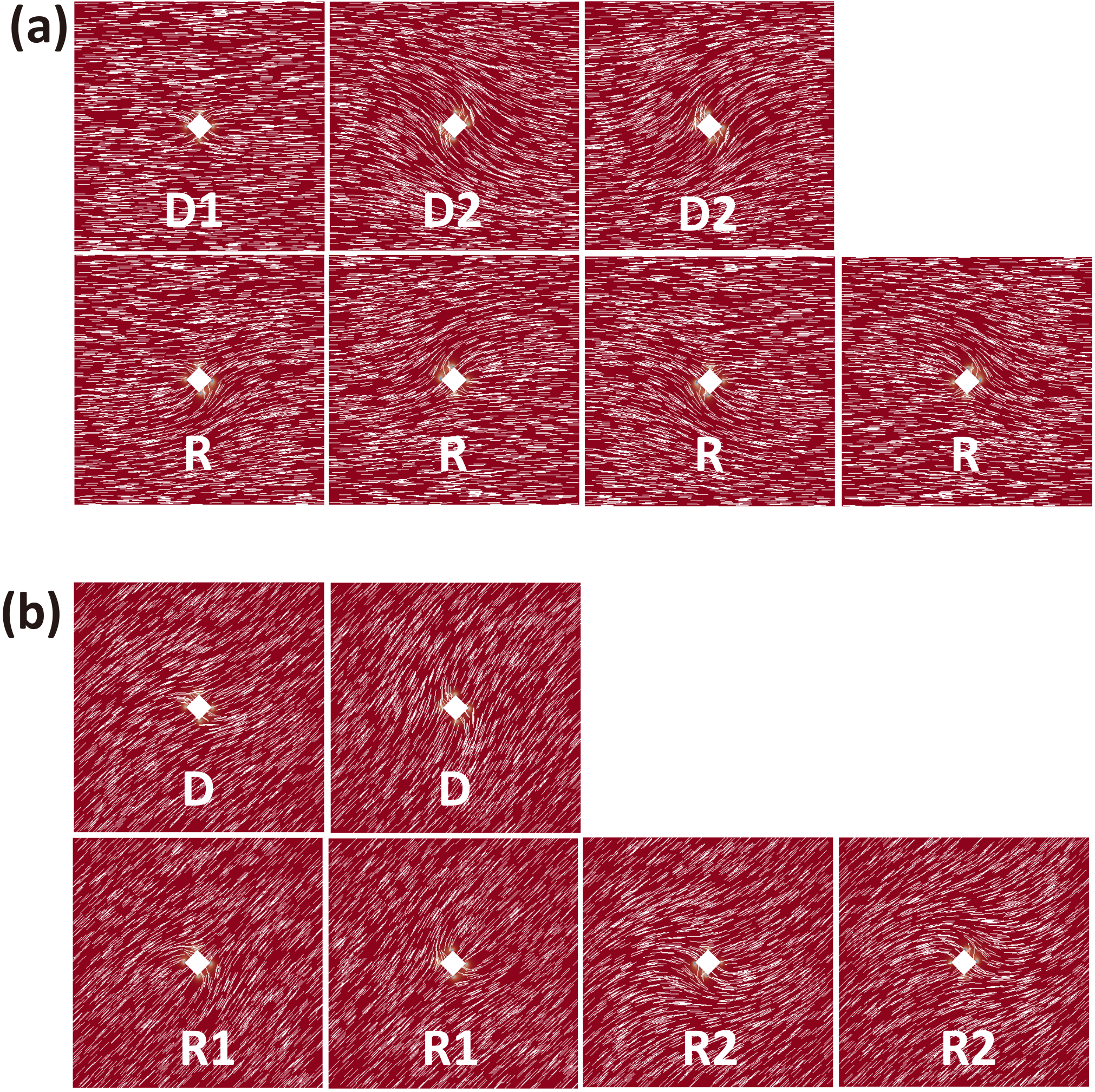}
        \caption{The limiting profiles for $E^C_4$, in the $\lambda\to \infty$ limit, with $\gamma^*= n\pi$ (a) and $\gamma^* = n\pi + \pi/4$ (b). The diagonal states are plotted in the first row of (a) and (b). The rotated states are plotted in the second row of (a) and (b). The white lines represent the nematic director $\mathbf{n} = (\cos(\gamma),\sin(\gamma))$, and the red color indicate the nematic order $s \equiv \frac{B}{2C}$.}
        \label{fig:dipole}
\end{figure}
For $K=4$, there are $6$ different choices for the two ``bend" vertices, (i) $2$ of which correspond to the two pairs of diagonally opposite vertices, (ii) $4$ of which correspond to pairs of adjacent vertices.% which are separated by one vertex.
We refer to (i) as diagonal, \emph{D} states, (ii) as rotated \emph{R} states, following the nomenclature in \cite{luo2012}, \cite{tsakonas2007multistable}. 

The diagonal states are defined by the following choices of $\gamma_b$ on $E_4$:%boundary conditions for D state ($D1$,$D2$,$St$) are
\begin{gather}
\gamma_{b1} = \begin{cases}
&\pi/4,\ on\ C_1,\\
& -\pi/4,\ on\ C_2,\\
& \pi/4,\ on\ C_3,\\
& -\pi/4\ on\ C_4,
\end{cases}
\gamma_{b2} = \begin{cases}
&\pi/4,\ on\ C_1,\\
& 3\pi/4,\ on\ C_2,\\
& \pi/4,\ on\ C_3,\\
& 3\pi/4\ on\ C_4.
\end{cases}
\end{gather}
%The choice $\gamma_{b1}$ corresponds to what we call the $D1$ state for $\gamma^* =0$ (for which there are no interior defects) and the $St$ state (for which there are two line defects) for $\gamma^*=\pi/4$. The choice, $\gamma_{b2}$ corresponds to what we call $D2$ states, for example, see the first row of Fig. \ref{fig:dipole}(a), where we plot $D2$ for two choices of $\gamma^*$, i.e., $\gamma^* = \pi$ (the second configuration) and $\gamma^* = 0$ (the third configuration).

There are four different choices of $\gamma_b$, corresponding to the four \emph{R} rotated states, as enumerated below: %correspond to different choices for $\gamma_b$ and the four different choices of $\gamma_b$ are enumerated below; the rotated states are plotted in the second row of Fig. \ref{fig:dipole}(a) and (b) from left to right.
\begin{gather}
\gamma_b = \begin{cases}
&\pi/4,\ on\ C_1,\\
& -\pi/4,\ on\ C_2,\\
& \pi/4,\ on\ C_3,\\
& 3\pi/4\ on\ C_4,
\end{cases}
\gamma_b = \begin{cases}
&\pi/4,\ on\ C_1,\\
& 3\pi/4,\ on\ C_2,\\
& \pi/4,\ on\ C_3,\\
& -\pi/4\ on\ C_4,
\end{cases}
\gamma_b = \begin{cases}
&\pi/4,\ on\ C_1,\\
& -\pi/4,\ on\ C_2,\\
& -3\pi/4,\ on\ C_3,\\
& -\pi/4\ on\ C_4,
\end{cases}
\gamma_b = \begin{cases}
&\pi/4,\ on\ C_1,\\
& 3\pi/4,\ on\ C_2,\\
& 5\pi/4,\ on\ C_3,\\
& 3\pi/4\ on\ C_4.
\end{cases}
\end{gather}  
For the interior problem on $E_4$, discarding rotation and reflection symmetries, there is simply one \emph{D} state and one \emph{R} state. This is not the case for the exterior problem on $E_4^C$, due to the choice of the $\gamma^*$. In Figure~\ref{fig:dipole}, we plot scalar order parameter and planar director of the reduced $\Pvec$-tensors defined in \eqref{P_infty} for the different choices of $\gamma_b$ defined above, for two different choices of $\gamma^*$. For $\gamma^* = n\pi$,  there are three distinct classes of solutions: $D1$, $D2$ and $R$ with $\gamma^* = n\pi$, and for $\gamma^* = \pi/4 + n\pi$, there are four distinct classes of solutions: $St$, $D$, $R1$, and $R2$. The label $St$ originates from the work in \cite{phillips2011texture}, which is the abbreviation of string defect mode.% with $\gamma^* = \pi/4 + n\pi$. All the diagonal- and rotated states in Fig. \ref{fig:dipole} are locally stable in the sense that the corresponding second variation of (\ref{p_energy}) 
%\begin{equation}
   % \partial^2F_{\lambda}[\eta] = \int_{[-10,10]^2\backslash E_4}|\nabla \eta|^2 + \frac{\lambda^2}{4}\left(|\mathbf{P}|^2-\frac{B^2}{2C^2}\right)|\eta|^2+\frac{\lambda^2}{2}\left(\mathbf{P}\cdot\eta\right)^2
 %   \label{eq:second}
%\end{equation}
%is strictly positive according to our numerical computations.
\begin{figure*}
        \centering
        \includegraphics[width=\textwidth]{./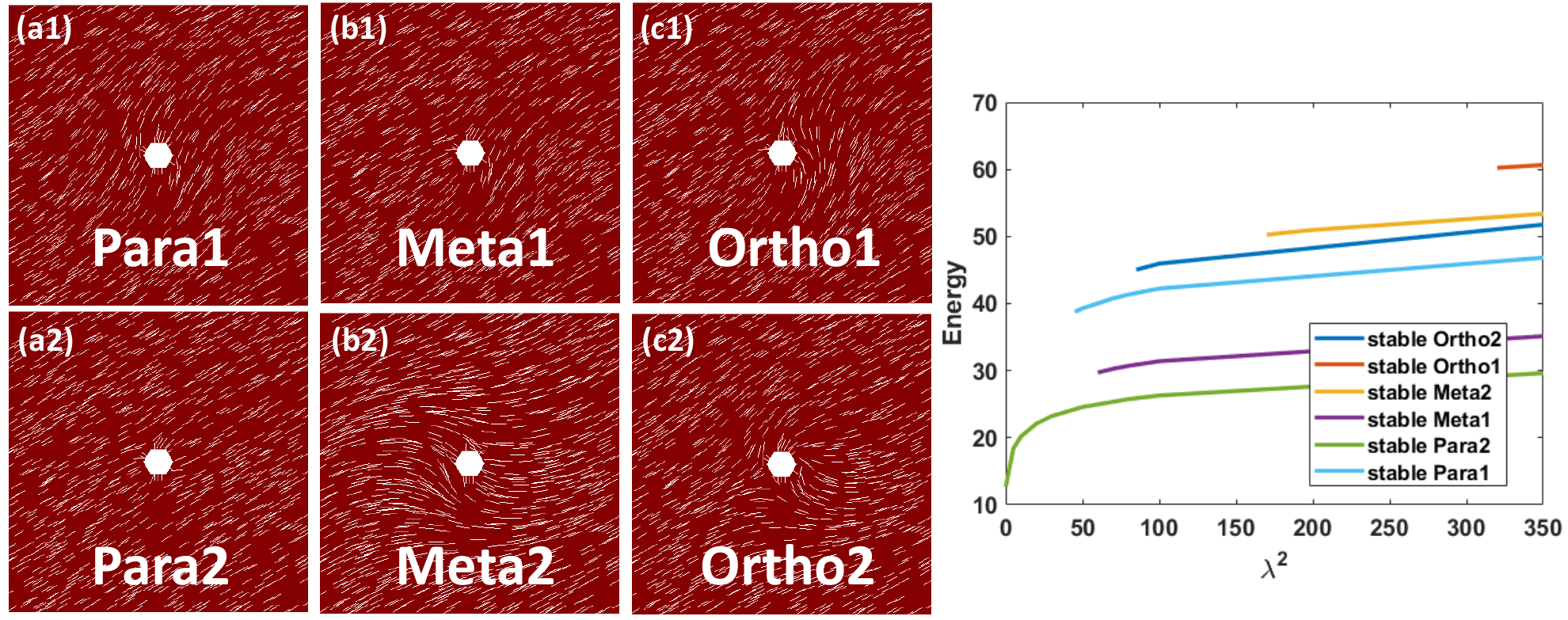}
        \caption{Left: distinct classes of solution of \eqref{infty_euler} which cannot be recovered from each other by reflection or rotation on the exterior of hexagon domain with $\gamma^* = \pi/6+n\pi$, (a1) $Para1$, (a2) $Para2$, (b1) $Meta1$, (b2) $Meta2$, (c1) $Ortho1$, (c2) $Ortho2$. Right: the energy plot of solutions in \eqref{Euler_Lagrange} versus $\lambda^2$.}
        %, where solid lines represent stable solution branches.}
       \label{fig:dipole_hexagon}
\end{figure*}

We can repeat the same combinatorial arguments for $K=6$, to get simple estimates for the number of stable states on $E^C_6$, in the $\lambda \to \infty$ limit. In \cite{han2020pol}, the authors study the same problem on $E_6$ and report three rotationally equivalent classes of solutions - $Para$, $Meta$ and $Ortho$ distinguished by the locations of the splay vertices. This nomenclature originates from organic chemistry to indicate the position of non-hydrogen substituents on a hydrocarbon ring (benzene derivative), and one can identify these substituents with splay or bend vertices. Following the same recipe as in Figure~\ref{fig:dipole}, we compute the limiting profiles for $\binom{6}{2} = 15$ different choices of $\gamma_b$ in \eqref{P_infty}, for $\gamma^* = \pi/6+n\pi$. Discarding rotation and reflection symmetries, we recover at least six different classes of limiting profiles (as opposed to three on $E_6$), plotted in Fig. \ref{fig:dipole_hexagon}. %In contrast, the authors report three classes of stable solutions for the interior problem on $E_6$ with tangent boundary conditions in \cite{han2020pol}. 
On $E^C_6$, there are two classes of $Para$ solutions -  $Para1$ and $Para2$ with two diagonally opposite bend vertices. $Para2$ is distinguished in the sense that the diagonal connecting the bend vertices is orthogonal to the direction $\nvec^* = (\cos \gamma^*, \sin \gamma^*)$. There are two classes of $Meta$ states: $Meta1$ and $Meta2$, for which the two bend vertices are separated by one vertex, and two further classes: $Ortho1$ and $Ortho2$ with two ``adjacent" bend vertices connected by an edge.
The unique $\mathbf{P}$-solution for small enough $\lambda$, is $Para2$ with two diagonally opposite bend vertices on the diagonal orthogonal to the fixed director at infinity; see Fig. \ref{fig:dipole_hexagon} (b). $Para2$ is stable for all $\lambda$. %The energy of the different solutions depends on both the location of splay and bend vertices, and $\gamma^*$, again giving us new control on multistability and the solution landscapes.

%\textcolor{red}{Do we have a new section here?}
%\section{Numerics}
\begin{figure*}
\centering
    \begin{subfigure}{0.5\columnwidth}
        \centering
        \includegraphics[width=\textwidth]{./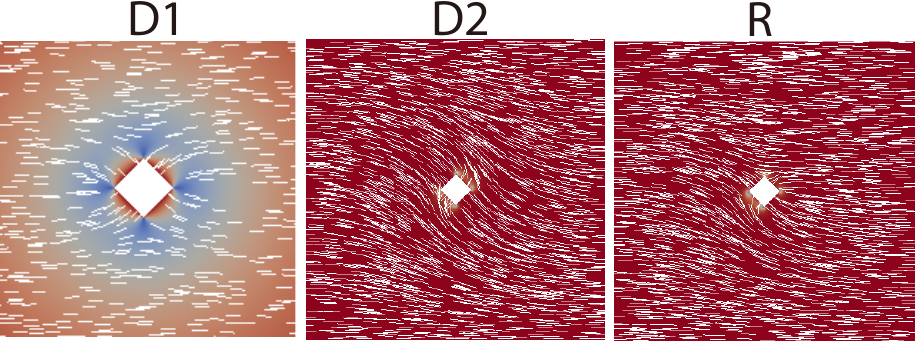}
    \end{subfigure}
    \begin{subfigure}{0.4\columnwidth}
        \centering
        \includegraphics[width=\textwidth]{./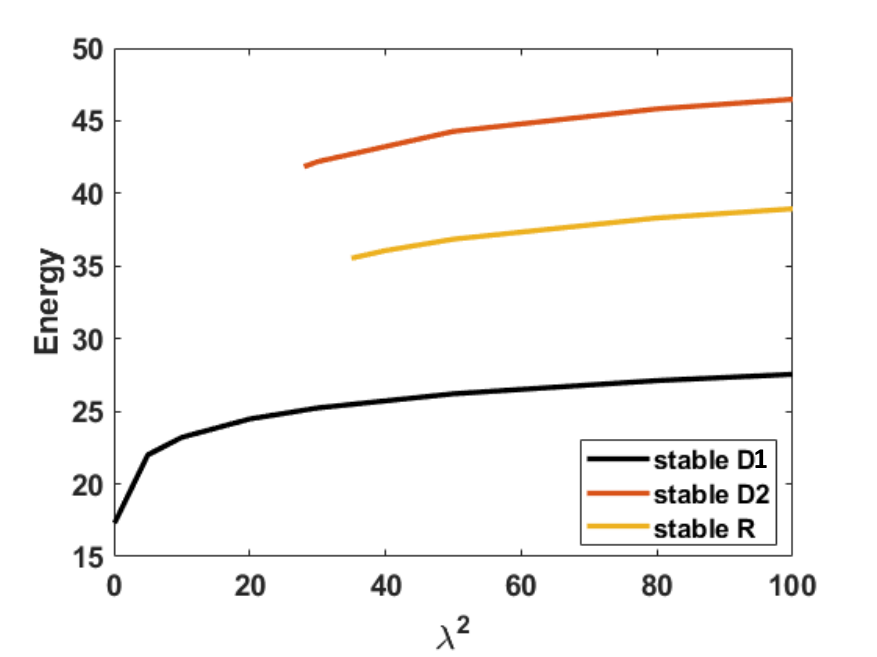}
    \end{subfigure}
        \caption{Left : plots for $D1$ with $\lambda^2 = 5$, $D2$ and $R$ with $\lambda^2 = 100$, the  solutions of the Euler--Lagrange equations in \eqref{Euler_Lagrange} on $[-10,10]^2\backslash E_4$ with $\gamma^* = n\pi$ (the configuration of $D$ is zoomed in). Right: the plot of the rLdG energy in \eqref{p_energy} for states on the left versus $\lambda^2$.
        %, where the solid lines represent stable solution branches.
        }
        \label{fig:bifurcation_0}
\end{figure*}

The limiting profiles in Sections~\ref{sec:lambda0} and \ref{sec:infinity} give us excellent initial conditions for finding solution branches of the Euler--Lagrange equations \eqref{Euler_Lagrange}. 
%In Figures \ref{fig:triangle}, \ref{fig:square}, and \ref{fig:general}, we show the limiting profiles for $\lambda\to 0$, which are solutions of the Laplace equation and Dirichlet boundary conditions in \eqref{eq:map_zero} on unit disc $D$. Corresponding to the limiting profiles on the mapped $D$ domain with different $K$ and far-field angle $\gamma^*$, 
We numerically compute the solutions of \eqref{zero_euler} on domain $[-10,10]^2\backslash E_K$, which is used to approximate the unbounded domain, $E_K^C$, with the Dirichlet boundary condition $\mathbf{P} = \mathbf{P}^*$ in \eqref{constraint_infty} on $x$ or $y= \pm 10$. In  Figures \ref{fig:dipole} and \ref{fig:dipole_hexagon}, we numerically compute the limiting profiles of $\gamma$ for \eqref{infty_euler} on $[-10,10]^2\backslash E_K$. These limiting profiles of $\gamma$ define the limiting profiles, $\mathbf{P}$ via the relation $(P_{11},P_{12}) = \frac{B}{2C}(\cos(2\gamma),\sin(2\gamma))$.
In Figures~\ref{fig:dipole_hexagon}, \ref{fig:bifurcation_0}, \ref{fig:bifurcation}, we use these limiting profiles as initial conditions to compute solution branches for different values of $\lambda^2$; we do not track bifurcations and may have missed solution branches but simply focus on energy comparisons between the numerically computed solution branches. 
We numerically compute the solutions of the Euler--Lagrange equations \eqref{Euler_Lagrange}, subject to the Dirichlet boundary conditions \eqref{Pb}, which are necessarily critical points of \eqref{p_energy}.  We take the solution of \eqref{zero_euler} (\eqref{infty_euler}) on $[-10,10]^2\backslash E_K$ as the initial conditions for finite but small (large) $\lambda^2$. We perform an increasing $\lambda^2$ sweep for the unique branch (in the $\lambda \to 0$ limit) and decreasing $\lambda^2$ sweep for the distinct diagonal- and rotated-branches. For small $\lambda$, a small step-size is chosen and a larger step-size is chosen for large $\lambda$, and this is somewhat adhoc. Except for the $St$ branch in Fig. \ref{fig:bifurcation}, we stop tracking the branch if the smallest eigenvalue approaches zero, i.e. near a bifurcation point.  
For all the numerical simulations in this manuscript including the calculation of the limiting profile on a unit disc in Figures \ref{fig:triangle}, \ref{fig:square}, and \ref{fig:general}, we use the
popular open-source computing software FEniCS for solving partial differential equations, using the finite element method \cite{olgg2012fenics}, which allows us to solve the weak form of the Euler--Lagrange equations or the Laplace equations, in the first order Lagrange element function space by using Newton's method, with tolerance $10^{-13}$ and mesh size less than $1/256$. The convergence may be highly sensitive to the choice of initial condition.
The energy of a solution is calculated according to the integral in \eqref{p_energy}. We check the stability of a solution by checking the sign of the smallest eigenvalue of the Hessian. All computed solution branches are stable in the sense that the corresponding second variation of (\ref{p_energy}) 
\begin{equation}
    \partial^2F_{\lambda}[\eta] = \int_{[-10,10]^2\backslash E_4}|\nabla \eta|^2 + \frac{\lambda^2}{4}\left(|\mathbf{P}|^2-\frac{B^2}{2C^2}\right)|\eta|^2+\frac{\lambda^2}{2}\left(\mathbf{P}\cdot\eta\right)^2 dx
   \label{eq:second}
\end{equation}
is strictly positive according to our numerical computations.
The smallest real eigenvalue of the Hessian is computed by the SLEPc eigenvalue solver with the algorithm krylov-schur, and the tolerance is $10^{-15}$. [https://fenicsproject.org/olddocs/dolfin/1.3.0/python/programmers-reference/cpp/la/SLEPcEigenSolver.html].

In Figure~\ref{fig:bifurcation_0}, we study stable solution branches of the Euler-Lagrange equations \eqref{Euler_Lagrange} on $E^C_4$, with $\gamma^* = n\pi$ and the prescribed boundary conditions on $\partial E_4$, for different values of $\lambda^2$. %in Fig. \ref{fig:bifurcation_0}.
%The distinct solution branches are distinguished in terms of the measure, $\int P_{12}(1+x+y/2) \mathrm{dxdy}$, and we plot this measure versus $\overline{\lambda}^2$ for the distinct solution  branches.
For small $\lambda^2$, there is a unique solution of the system \eqref{Euler_Lagrange} on the domain $[-10,10]^2\backslash E_4$, subject to the boundary conditions \eqref{Pb} and constraint $\gamma^* = n\pi$, labelled as the $D1$ state, with two bend vertices along the diagonal on $x = 0$. This solution branch is computed using the limiting profile in Section~\ref{sec:lambda0}, i.e. the solution of (\ref{zero_euler}) on $E^C_4$ and then using continuation methods for large $\lambda$.
This solution branch remains stable for all $\lambda^2>0$. When $\lambda^2$ is large enough, we observe the stable $D2$ with two bend vertices along the diagonal on $y=0$, and the stable $R$ solution. The $D2$ represents two energetically degenerate diagonal states which are related by reflection about $x = 0$ or $y = 0$ (see the first line of Fig. \ref{fig:dipole}(a)) and the $R$ represent four rotated states (see the second line of Fig. \ref{fig:dipole}(a)). The energy of $D1$ solution is always lower than the $D2$ solution and the rotated solutions have higher energies than the diagonal solutions. We do not investigate the connections between solution branches in this manuscript further.
%The solution branches are disconnected according to our simulations, suggesting that the $D2$ and $R$ solutions may appear from saddle-node bifurcations. We do not investigate this further.}% The solution branches are recovered by solving the boundary-value problem (\ref{infty_euler}) for the prescribed choices of $\gamma_b$ above, and using these limiting profiles as initial conditions for large $\lambda$, and then using continuation methods for smaller values of $\lambda$.

\begin{figure*}
\centering
    \begin{subfigure}{0.4\columnwidth}
        \centering
        \includegraphics[width=\textwidth]{./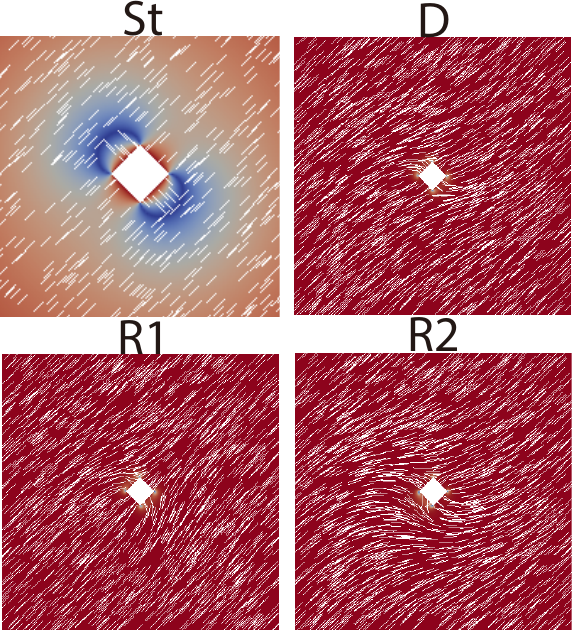}
    \end{subfigure}
    \begin{subfigure}{0.5\columnwidth}
        \centering
        \includegraphics[width=\textwidth]{./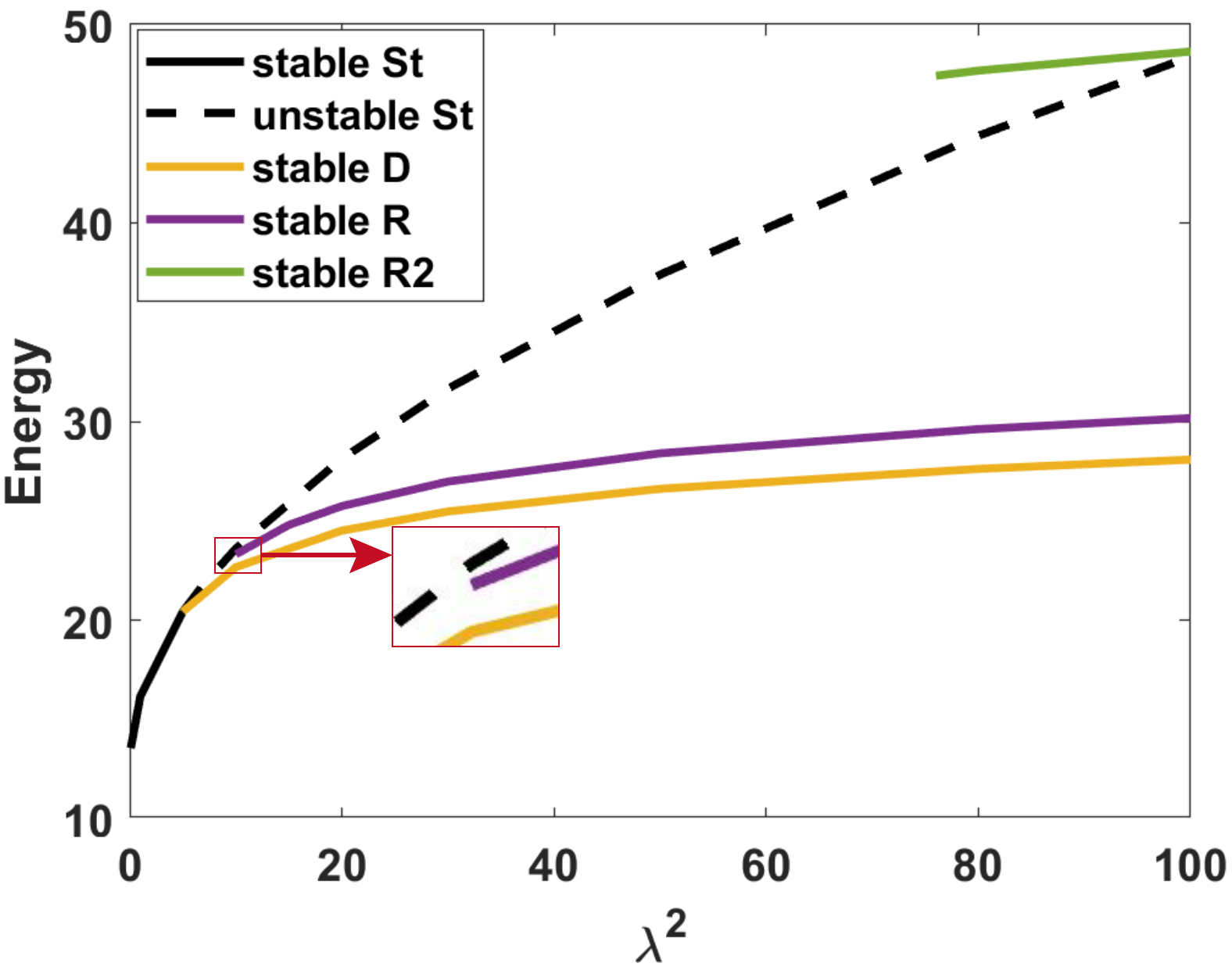}
    \end{subfigure}
        \caption{
         Left : plots for 
         $St$ with $\lambda^2 = 5$, $D$, $R1$ and $R2$ with $\lambda^2 = 100$, the solutions of Euler--Lagrange equations in \eqref{Euler_Lagrange} on $[-10,10]^2\backslash E_4$ with $\gamma^* = \pi/4+n\pi$ (the configuration of $St$ is zoomed in). Right: the plot of energy in \eqref{p_energy} for the states on the left  versus $\lambda^2$, where the solid lines represent stable solution branches and dashed lines correspond to unstable branches. We plot a zoom near the end of $R$, and the branch $R$ doesn't connect with unstable $St$ branch.}
        \label{fig:bifurcation}
\end{figure*}
The case of $\gamma^* = \pi/4+n\pi$ is different (see Fig. \ref{fig:bifurcation}).
The unique $\mathbf{P}$-solution for small $\lambda$, is the $St$-branch with two line defects near opposite square edges.
%, parallel to the direction in infinity. 
This $St$-branch exists for all $\lambda^2>0$. As $\lambda^2$ increases, $St$ loses stability and bifurcates into the stable $D$, diagonal solution with two diagonally opposite bend vertices. When $\lambda^2>8.2$, the stable $R1$ solution branch exists. The $R1$ solution has two bend vertices connected by the edge of square, and this edge is perpendicular to $(\cos\gamma^*, \sin\gamma^*)$. There is a stable $R2$ branch, where the $R2$ has two bend vertices along the square edge, which is parallel to $(\cos\gamma^*, \sin\gamma^*)$. This solution branch only exists for $\lambda^2$ large enough. The energy of the diagonal solution is always lower than $R1$, which in turn is energetically preferable to the $R2$ solution branches. 
The families of solutions, labelled by $D$, $R1$ and $R2$, have two members each. The two states in $D$ or $R1$ or $R2$ class are related by reflection about $y = -x$ (see Fig. \ref{fig:dipole}(b)).
%The $R1$ and $R2$ solution branches represent two energetically degenerate rotated states, respectively, which are related by reflection about $y = -x$ in the second line of Fig. \ref{fig:dipole}(b).
The diagram with $\gamma^* = \pi/4+n\pi$ is similar to the diagram on $E_4$ reported in \cite{han2020pol}, with the $St$-branch on $E_4^C$ being analogous to the $WORS$ on $E_4$, and the $D$-branch on $E_4^C$ being analogous to the $D$ on $E_4$, the exception being that the $R1$ and $R2$ solutions are energetically degenerate on $E_4$ but not on $E_4^C$ due to the symmetry-breaking condition at infinity. Similar energy computations are performed for the $Para$, $Meta$, $Ortho$ solution branches in Figure~\ref{fig:dipole_hexagon} on $E_6^C$, for illustrative purposes.%although there is no analogue of the $BD$ state on $E^C_4$, and the $R2$ branch is disconnected from the $St$-branch. For $\gamma^* = n\pi$, we do not observe bifurcation points as such, and simply observe the different stable solution branches and this could well be a limitation of the numerical methods. % The energy of $R2$ is very large.

\section{Conclusion}
\label{sec:conclusion}
This paper focuses on illustrative studies of nematic equilibria outside 2D regular polygonal holes, $E_K$ with $K$ edges, focusing on the effects of the shape and size of the polygon in a reduced LdG framework. The equilibria are modelled in terms of a reduced order parameter, $\Pvec$ with two degrees of freedom, that are minimisers of a rLdG free energy (\ref{p_energy}), subject to homeotropic boundary conditions on the hole boundary and a fixed far-field condition, captured by $\gamma^*$. We focus on two distinguished limits: the $\lambda \to 0$ limit relevant for small holes, and the $\lambda \to \infty$ limit relevant for large holes. We study the limiting problem for $\lambda =0$ comprehensively, using methods from complex analysis and Ginzburg-Landau theory. The limiting profile is unique for $\lambda=0$, has an isolated interior point defect for $K=3$, two interior point defects for $K>4$ and for $K=4$, the limiting profile has either line defects or no interior defects depending on $\gamma^*$. In the $\lambda \to \infty$ limit, we have multiple stable equilibria i.e. at least $\binom{K}{2}$ stable equilibria on $E_K^C$, where $K$ is the number of edges of the polygon hole. This is certainly an underestimate, based on two assumptions with regards to minimal topology. Compared to the interior problem studied in \cite{han2020pol}, where the authors study minimisers of the rLdG energy, \eqref{p_energy}, on $E_K$ with tangent boundary conditions, we find that there are more possibilities for defect sets and multistability for the exterior problem. For example, the far-field condition, $\gamma^*$ tunes the location of defects for limiting profiles and for $K=4$, $\gamma^*$ dictates the dimensionality of the defect set for limiting profiles. Further, $\gamma^*$ strongly enhances multistability for the exterior problem compared to the interior counterpart in \cite{han2020pol}. For example, in the $\lambda \to \infty$ limit, there are only two competing equilibria on $E_4$ but there can be up to four different
classes of competing equilibria on $E_4^C$, depending on the choice of $\gamma^*$. In this respect, our work provides good information about how $K$ and $\gamma^*$ can tune the existence, locations and dimensionality of defects on $E_K^C$, and how this can be used to tune multistability.
Our work does not uncover solution landscapes for this exterior problem and does not give information about the connectivity of solution branches i.e. there is no systematic bifurcation analysis. This is much needed and could be tackled using high-index optimisation shrinking dimer methods, as in \cite{hannonlinearity2020}. Such a study would be crucial for understanding pathways between distinct equilibria, the energy barriers and understanding the dynamics of such exterior problems on energy landscapes. In fact, a complete study of solution landscapes can give strong insight into the differences between the interior and exterior problem. Other natural generalisations include addition of the effects of elastic anisotropy (see \cite{hanSIMA2021}), other types of boundary conditions on $\partial E_K$ and studies of arrays of polygonal holes and creation of linked defect structures. Our work provides a foundation for further detailed, comprehensive studies and our overarching goal is to propose universal theoretical frameworks for solution landscapes of confined partially ordered systems, with multiple order parameters.

\textbf{Acknowledgments}
AM and YH thank Ingo Dierking and Adam Draude for suggesting this problem to them, complemented by their experimental work. AM gratefully acknowledges support from the University of Strathclyde New Professors Fund and a University of Strathclyde Global Engagement Grant. AM is also supported by a Leverhulme International Academic Fellowship IAF-2019-009, a Daiwa Foundation Small Grant and a Royal Society Newton Advanced Fellowship. Part of this work was facilitated by a London Mathematical Research Reboot grant awarded to AM, in 2021. YH acknowledges support from a Royal Society Newton International Fellowship. 
%?\\
%A: The boundary conditions on the outer finite square effect the profile near outer boundary. However, we work on exterior of polygon which contains spatial infinity point rather than a finite domain. For inclusion partical with large but finite size, $Sb$ stable state with two point defects doesn't exist.
%Let us compare our nematic equlibria on exterior of square and disc with the results in other papers, Figure. \ref{fig:square_equilibria}, \ref{fig:square_phase_diagram}, \ref{fig:disc_equilibria}
%\begin{figure} 
%\centering
%        \includegraphics[width=0.5\columnwidth]{./image/St_Su_Sb.PNG}
%        \caption{Nematic equalibria on exterior of square (a)St, (b)Sb, (c) Su. \cite{phillips2011texture}}
%        \label{fig:square_equilibria}
%\end{figure}
%\begin{figure}
%\centering
%        \includegraphics[width=0.5\columnwidth]{./image/square_phase_diagram.PNG}
%        \caption{Phase diagram of St, Su and Sb. \cite{phillips2011texture}}
%        \label{fig:square_phase_diagram}
%\end{figure}
%\begin{figure}
%\centering
%        \includegraphics[width=0.5\columnwidth]{./image/quadrupole_dipole.PNG}
%        \caption{Nematic equalibria on exterior of ball (a) Dipole, which is stable for large particle, (b) Saturn ring quadrupole, which is stable for small particle. \cite{chernyshuk2011theory}}
%        \label{fig:disc_equilibria}
%\end{figure}
\section {Appendix}
\begin{proposition}\label{gamma_infinity_restriction}
We can restrict $\gamma^*\in[0,\frac{\pi}{K}]$, since there are rotation relation between\\ $(p_{11}, p_{12})\vert_{(re^{i\theta+2\pi ki/K},\gamma^*-\frac{2\pi k}{K})}$, $k = 1,\cdots,K$, and $(p_{11}, p_{12})\vert_{(re^{i\theta},\gamma^*)}$, and reflection relation between $(p_{11}, p_{12})\vert_{(re^{i\theta},\gamma^*)}$ and $(p_{11}, p_{12})\vert_{(re^{-i\theta},-\gamma^*)}$.
\end{proposition}
\begin{proof} From the relationship between $\overline{\alpha}_k$ and $\overline{\alpha}_{n+k}$ in \eqref{alpha_rotation}, and the solution $(p_{11},p_{12})$ in \eqref{zero_solution_p11} and \eqref{zero_solution_p12}, one can check
\begin{align}\label{p11_rotation}
&p_{11}\left(re^{i\theta+2\pi n i/K},\gamma^*-\frac{2\pi n}{K}\right)\\
& =  \frac{1}{2\pi}\sum_{k = 1}^K\int_{2\pi(k-1)/K}^{2\pi k/K}\overline{\alpha}_k\frac{1-r^2}{1-2r\cos(\phi-\theta-\frac{2\pi n}{K})+r^2}d\phi + \lim_{\epsilon\to 0}\frac{B}{2C}\frac{\cos(2\gamma^* - \frac{4\pi n}{K})ln r}{ln \epsilon}\nonumber\\
& =  \frac{1}{2\pi}\sum_{k = 1}^{K}\int_{2\pi(k-1-n)/K}^{2\pi (k-n)/K}\overline{\alpha}_k\frac{1-r^2}{1-2r\cos(\phi-\theta)+r^2}d\phi + \lim_{\epsilon\to 0}\frac{B}{2C}\frac{\cos(2\gamma^* - \frac{4\pi n}{K})ln r}{ln \epsilon}\nonumber\\
& = \frac{1}{2\pi}\sum_{k = 1-n}^{K-n}\int_{2\pi(k-1)/K}^{2\pi k/K}\overline{\alpha}_{k+n}\frac{1-r^2}{1-2r\cos(\phi-\theta)+r^2}d\phi + \lim_{\epsilon\to 0}\frac{B}{2C} \frac{\cos(2\gamma^* - \frac{4\pi n}{K})ln r}{ln \epsilon}\nonumber\\
& = \frac{1}{2\pi}\sum_{k = 1}^{K}\int_{2\pi(k-1)/K}^{2\pi k/K}\overline{\alpha}_{k+n}\frac{1-r^2}{1-2r\cos(\phi-\theta)+r^2}d\phi + \lim_{\epsilon\to 0}\frac{B}{2C} \frac{\cos(2\gamma^* - \frac{4\pi n}{K})ln r}{ln \epsilon}\nonumber\\
&=\cos\left(\frac{4\pi n}{K}\right)p_{11}(re^{i\theta},\gamma^*) + \sin\left(\frac{4\pi n}{K}\right)p_{12}(re^{i\theta},\gamma^*).
\end{align}
In the first instance, we change the range of integration of $\phi$ using a change of variable, and then use (\ref{alpha_rotation}) to write $\overline{\alpha}_{k+n}$ in terms of $\overline{\alpha}_k$ and $\overline{\beta}_k$ etc.
Using analogous arguments, one obtains
\begin{equation}\label{p12_rotation}
p_{12}\left(re^{i\theta+2\pi ni/K},\gamma^*-\frac{2\pi n}{K}\right) = \cos\left(\frac{4\pi n}{K}\right)p_{12}(re^{i\theta},\gamma^*) - \sin\left(\frac{4\pi n}{K}\right)p_{11}(re^{i\theta},\gamma^*).
\end{equation}

Using the relation between $\overline{\alpha}_k$ and $\overline{\alpha}_{K-k+1}$ in \eqref{alpha_reflection}, we have
\begin{align}\label{p11_reflection}
&p_{11}(re^{-i\theta},-\gamma^*)\\ 
& = \frac{1}{2\pi}\sum_{k = 1}^K\int_{2\pi(k-1)/K}^{2\pi k/K}\overline{\alpha}_k\frac{1-r^2}{1-2r\cos(\phi+\theta)+r^2}d\phi + \lim_{\epsilon\to 0}\frac{B}{2C}\frac{\cos(-2\gamma^*)ln r}{ln \epsilon}\nonumber\\
& = \frac{1}{2\pi}\sum_{k = 1}^K\int_{2\pi(k-1)/K}^{2\pi k/K}\overline{\alpha}_k\frac{1-r^2}{1-2r\cos(-\phi-\theta)+r^2}d\phi + \lim_{\epsilon\to 0}\frac{B}{2C}\frac{\cos(2\gamma^*)ln r}{ln \epsilon}\nonumber\\
& = -\frac{1}{2\pi}\sum_{k = 1}^K\int_{-2\pi(k-1)/K}^{-2\pi k/K}\overline{\alpha}_k\frac{1-r^2}{1-2r\cos(\phi-\theta)+r^2}d\phi + \lim_{\epsilon\to 0}\frac{B}{2C}\frac{\cos(2\gamma^*)ln r}{ln \epsilon}\nonumber\\
& = -\frac{1}{2\pi}\sum_{k = 1}^K\int_{2\pi (K-k+1)/K}^{2\pi (K-k)/K}\overline{\alpha}_k\frac{1-r^2}{1-2r\cos(\phi-\theta)+r^2}d\phi + \lim_{\epsilon\to 0}\frac{B}{2C}\frac{\cos(2\gamma^*)ln r}{ln \epsilon}\nonumber\\
& = \frac{1}{2\pi}\sum_{k = 1}^K\int_{2\pi (K-k)/K}^{2\pi (K-k+1)/K}\overline{\alpha}_k\frac{1-r^2}{1-2r\cos(\phi-\theta)+r^2}d\phi + \lim_{\epsilon\to 0}\frac{B}{2C}\frac{\cos(2\gamma^*)ln r}{ln \epsilon}\nonumber\\
& = p_{11}(re^{i\theta},\gamma^*)
\end{align}
and using analogous arguments, we obtain
\begin{equation}\label{p12_reflection}
p_{12}(re^{-i\theta},-\gamma^*) = -p_{12}(re^{i\theta},\gamma^*).
\end{equation}
\end{proof}
%\begin{corollary}\label{remark1}
% If $K$ is even, using \eqref{p11_rotation} and \eqref{p12_rotation} with $n = K/2$, we have the symmetry property $\mathbf{p}(re^{i\theta+i\pi},\gamma^*) = \mathbf{p}(re^{i\theta},\gamma^*)$. Due to the property of the SC mapping $f$ in \eqref{f_rotation}, $f(re^{i\theta+i\pi}) = e^{-i\pi}f(re^{i\theta}) = e^{i\pi}f(re^{i\theta})$, $\mathbf{P}(\rho e^{i\psi+i\pi},\gamma^*) = \mathbf{P}(\rho  e^{i\psi},\gamma^*)$.
%\end{corollary}

%\begin{corollary}\label{0_symmetry}
%From \eqref{p11_reflection} and \eqref{p12_reflection} with $\gamma^* = 0$, we have
%\begin{gather}

\begin{corollary}\label{piK_symmetry}
With $\gamma^* = \pi/K$, $(P_{11},P_{12})$ has reflection symmetry about $\psi = \pi/K$, i.e.,
\begin{align}
P_{11}(\rho e^{\pi/K i-\psi i},\pi/K) &= P_{11}(\rho e^{\pi/K i+ \psi i},\pi/K)\cos(4\pi/K) + P_{12}(\rho e^{\pi/K i+ \psi i},\pi/K)\sin(4\pi/K),\\
P_{12}(\rho e^{\pi/K i-\psi i},\pi/K) &= -P_{12}(\rho e^{\pi/K i+ \psi i},\pi/K)\cos(4\pi/K) + P_{11}(\rho e^{\pi/K i + \psi i},\pi/K)\sin(4\pi/K).
\end{align}
\end{corollary}
\begin{proof}
Use \eqref{p11_rotation} and \eqref{p12_rotation} with $\gamma^* = -\pi/K$, $n = -1$, and substitute $\theta+\pi/K$ for $\theta$ to get
\begin{align}
&p_{11}\left(r e^{-\pi/K i +\theta i},\pi/K\right) = \cos\left(\frac{-4\pi}{K}\right)p_{11}(re^{i\pi/K + i\theta},-\pi/K) + \sin\left(\frac{-4\pi}{K}\right)p_{12}(re^{i\pi/K + i\theta},-\pi/K),\label{11}\\
&p_{12}\left(r e^{-\pi/K i +\theta i},\pi/K\right) = \cos\left(\frac{-4\pi}{K}\right)p_{12}(re^{i\pi/K + i\theta},-\pi/K) - \sin\left(\frac{-4\pi}{K}\right)p_{11}(re^{i\pi/K + i\theta},-\pi/K).\label{12}
\end{align}
Combining \eqref{p11_reflection} with \eqref{p12_reflection}, it follows that
\begin{gather}\label{2}
p_{11}\left(r e^{\pi/K i +\theta i},-\pi/K\right) = p_{11}\left(r e^{-\pi/K i-\theta i},\pi/K\right),\qquad p_{12}\left(r e^{\pi/K i+\theta i},-\pi/K\right) = -p_{12}\left(r e^{-\pi/K i-\theta i},\pi/K\right).
\end{gather}

Substituting \eqref{2} into \eqref{11} and \eqref{12}, we obtain
\begin{align}
&p_{11}\left(r e^{-i\pi/K +i\theta},\pi/K\right) = \cos\left(\frac{4\pi}{K}\right)p_{11}(re^{-i\pi/K -i\theta},\pi/K) + \sin\left(\frac{4\pi}{K}\right)p_{12}(re^{-i\pi/K - i\theta},\pi/K),\\
&p_{12}\left(r e^{-i\pi/K +i\theta},\pi/K\right) = -\cos\left(\frac{4\pi}{K}\right)p_{12}(re^{-i\pi/K - i\theta},\pi/K) + \sin\left(\frac{4\pi}{K}\right)p_{11}(re^{-i\pi/K - i\theta},\pi/K).
\end{align}
Using \eqref{f_reflection} and \eqref{f_rotation}, the SC mapping in \eqref{eq:SC} preserves reflection symmetry, $f(re^{-i\pi /K-i\theta }) = \overline{f(re^{i \pi /K + i\theta})} = e^{i2\pi /K}\overline{f(re^{-i\pi/K + i\theta})}$.
We have for any $\rho e^{i\pi/K + \psi i}\in E_K^C$ satisfying $f(re^{-i\pi/K - i\theta}) = \rho e^{i\pi/K + i\psi }$ and $f(re^{-i\pi/K + i\theta}) = \rho e^{i\pi/K - i\psi }$, $(P_{11},P_{12})(\rho e^{i\pi/K + i\psi},\pi/K) = (p_{11},p_{12})(re^{-i\pi/K - i\theta},\pi/K)$ and $(P_{11},P_{12})(\rho e^{i\pi/K - i\psi },\pi/K) = (p_{11},p_{12})(re^{-i\pi/K + i\theta},\pi/K)$. Hence $(P_{11},P_{12})$ has reflection symmetry about $\psi = \pi/K$.
\end{proof}

\newpage
\bibliographystyle{unsrt}
\bibliography{exterior_arxiv.bib}
\end{document}